\newcommand{\body}{\mathsf{body}}
\newcommand{\m}[1]{\ensuremath{\mathsf{#1}}}
\newcommand{\subst}[1]{\ensuremath{\left[{#1}\right]}}
\newcommand{\supp}[1]{\ensuremath{\m{supp}(#1)}}
\newcommand{\var}[1]{\ensuremath{\m{var}(#1)}}
\newcommand{\ans}[3]{\ensuremath{\mathcal A(#1,#2,#3)}}
\newcommand{\hans}[3]{\ensuremath{\mathcal H(#1,#2,#3)}}
\newcommand{\sans}[3]{\ensuremath{\mathcal E(#1,#2,#3)}}
\newcommand{\dans}{\ans QD\tau}
\newcommand{\dhans}{\hans QD\tau}
\newcommand{\dsans}{\sans QD\tau}
\newcommand{\Qex}{\ensuremath{Q_E}}
\newcommand{\Piex}{\ensuremath{\Pi_E}}
\newcommand{\cans}[5]{\ensuremath{\tuple{{#1},#2_{#3}}\vdash_{\mathsf{SLD}}\langle{#4,#5}\rangle}}
\newcommand{\dcans}{\cans QD\tau\theta{\wedge_i\alpha_i}}
\newcommand{\tuple}[1]{\ensuremath{\langle{#1}\rangle}}
\newcommand{\gr}[1]{\ensuremath{\mathcal G}_{#1}}
\newcommand{\sch}[1]{\ensuremath{\mathcal S}_{#1}}
\newcommand{\Pid}{\ensuremath{\Pi^\downarrow}}
\newcommand{\NN}{\mathbb N}
\newcommand{\toremove}{\color{red}}
\newtheorem{proposition}{Proposition}
\newtheorem{example}{Example}
\newtheorem{definition}{Definition}
\newtheorem{lemma}[proposition]{Lemma}
\newtheorem{corollary}[proposition]{Corollary}
\newtheorem{theorem}{Theorem}
\newenvironment{newtext}{}{}
\newcommand{\eoe}{\hfill\ensuremath{\triangleleft}}
\begin{document}

\title{Hypothetical answers to continuous queries over data streams}

\author{Lu\'\i s Cruz-Filipe$^1$ \and Graça Gaspar$^2$ \and Isabel Nunes$^2$}
\date{${}^1$ Dept.\ Mathematics and Computer Science, University of Southern Denmark\\
  ${}^2$ LASIGE, Departmento de Informática, Faculdade de Ciências, Universidade de Lisboa}

\maketitle

\begin{abstract}
  Answers to continuous queries over data streams are often delayed until some relevant input arrives through the data stream.
  These delays may turn answers, when they arrive, obsolete to users who sometimes have to make decisions with no help whatsoever.
  Therefore, it can be useful to provide hypothetical answers -- ``given the current information, it is possible that $X$ will become true at time $t$'' -- instead of no information at all.

  In this paper we present a semantics for queries and corresponding answers that covers such hypothetical answers, together with an incremental online algorithm for updating the set of facts that are consistent with the currently available information.
  Our framework also works in a language supporting negation.
\end{abstract}


\maketitle

\section{Introduction}
\label{sec:intro}

Modern-day reasoning systems often have to react to real-time information about the real world provided by e.g.~sensors.
This information is typically conceptualized as a data stream, which is accessed by the reasoning system.
The reasoning tasks associated to data streams -- usually called \emph{continuous queries} -- are expected to run continuously and produce results through another data stream in an online fashion, as new elements arrive.

A data stream is a potentially unbounded sequence of data items generated by an active, uncontrolled data source.
Elements arrive continuously at the system, potentially unordered, and at unpredictable rates.
Thus, reasoning over data streams requires dealing with incomplete or missing data, potentially storing large amounts of data (in case it might be needed to answer future queries), and providing answers in timely fashion -- among other problems, see e.g.~\cite{Babcock2002,DellAglio2017,Stonebraker2005}.

The output stream is normally ordered by time, which implies that the system may have to delay appending some answer because of uncertainty in possible answers relating to earlier time points.
The length of this delay may be unpredictable (\emph{unbound wait}) or infinite, for example if the query uses operators that range over the whole input data stream (\emph{blocking operations}).
In these cases, answers that have been computed may never be output.
An approach to avoid this problem is to restrict the language by forbidding blocking operations~\cite{Ronca2018,Zaniolo2012}.
Another approach uses the concept of reasoning window~\cite{Becketal2015,Ozcep2017}, which bounds the size of the input that can be used for computing each output (either in time units or in number of events).
\begin{newtext}
In our work, we do not impose such restrictions, and we show that we are able to provide useful information even in some instances of blocking queries.
\end{newtext}

In several applications, it is useful to know that some answers are likely to be produced in the future, since there is already some information that might lead to their generation.
This is the case namely in prognosis systems (e.g., medical diagnosis, stock market prediction), where one can prepare for the possibility of something happening.
To this goal, we propose \emph{hypothetical answers}: answers that are supported by information provided by the input stream, but that still depend on other facts being true in the future.
Knowledge about both the facts that support the answer and possible future facts that may make it true gives users the possibility to make timely, informed decisions in contexts where preemptive measures may have to be taken.

Moreover, by giving such hypothetical answers to the user we cope with unbound wait in a constructive way, since the system is no longer ``mute'' while waiting for an answer to become definitive.

Many existing approaches to reasoning with data streams adapt and extend models, languages and techniques used for querying databases and the semantic web~\cite{Arasu2006,Barbieri2009}.
We develop our theory in line with the works of~\cite{Becketal2015,DaoTran2017,Ozcep2017,Ronca2022,Zaniolo2012}, where continuous queries are treated as rules of a logic program consisting of both rules and facts arriving through a data stream.
Our techniques bear some similarites to previous work on abduction (see the discussion on related work in Section~\ref{sec:rw}), but to the best of our knowledge this is the first time that hypothetical reasoning has been considered in the context of continuous query answering.

\paragraph{Contribution}
We present a declarative semantics for queries in Temporal Datalog, where we define the notions of hypothetical and supported answers.
We define an operational semantics based on SLD-resolution, where we compute answers with premises to queries.
We then show that there is a natural connection between the answers computed by the operational semantics and the hypothetical and supported answers defined declaratively.
By refining SLD-resolution, we obtain an incremental online algorithm for maintaining and updating the set of answers that are consistent with the currently available information.
Finally, \begin{newtext}we define Temporal Datalog with negation and show that our results can be extended to this language.\end{newtext}

\paragraph{Publication history}
A preliminary version of this work, not including proofs, was published in conference proceedings~\cite{ourAAAIpaper}.
The current article extends that work with proofs of all results and further examples.
Section~\ref{sec:neg} (negation), which was only a sketch, was completely restructured, and now includes a precise definition of the declarative and operational semantics, as well as full proofs of all results, and several examples.
Several results in this section were not stated previously, including soundness and completeness.

\paragraph{Structure}
Section~\ref{sec:backgr} revisits some fundamental background notions, namely the formalism from~\cite{Motik}, which we extend in this paper.
Section~\ref{sec:awp} introduces our declarative semantics for continuous queries, defining hypothetical and supported answers, and relates these concepts with the standard definitions of answers.
Section~\ref{sec:sld} presents our operational semantics for continuous queries and relates it to the declarative semantics. 
Section~\ref{sec:algorithm} details our online algorithm to compute supported answers incrementally, as input facts arrive through the data stream, and proves it sound and complete.
Section~\ref{sec:neg} extends our framework to negation by failure.
Section~\ref{sec:rw} compares our proposal to similar ones in the literature, discussing its advantages.

\section{Background}
\label{sec:backgr}

\subsection{Continuous queries in Temporal Datalog}
\label{sec:motiks}
We use the framework from~\cite{Motik} to write continuous queries over datastreams, slightly adapting some definitions.
We work in \emph{Temporal Datalog}, the fragment of negation-free Datalog extended with the special temporal sort from~\cite{Chomicki1988}, which is isomorphic to the set of natural numbers equipped with addition with arbitrary constants.
In Section~\ref{sec:neg} we extend this language with negation.

\paragraph{Syntax of Temporal Datalog}
A \emph{vocabulary} consists of constants (numbers or identifiers in lowercase), variables (single uppercase letters) and predicate symbols (identifiers beginning with an uppercase letter).
All these may be indexed if necessary; occurrences of predicates and variables are distinguished by context.
In examples, we use words in sans serif for concrete constants and predicates.

Constants and variables have one of two sorts: \emph{object} or \emph{temporal}.
An \emph{object term} is either an object (constant) or an object variable.
A \emph{time term} is either a natural number (called a \emph{time point} or \emph{temporal constant}), a time variable, or an expression of the form $T+\m{k}$ where $T$ is a time variable and $\m{k}$ is an integer.

Predicates can take at most one temporal parameter, which we assume to be the last one (if present).
A predicate with no temporal parameter is called \emph{rigid}, otherwise it is called \emph{temporal}.
An atom is an expression $P(t_1,\ldots,t_n)$ where $P$ is a predicate and each $t_i$ is a term of the expected sort.
Without loss of generality, we assume that all predicates are temporal.
Rigid predicates can be made temporal by adding a dummy temporal parameter that does not affect their semantics.
In other words, we replace every instance of $P(t_1,\ldots,t_n)$ by $P'(t_1,\ldots,t_n,0)$.
Thus, $P'(t_1,\ldots,t_n,0)$ holds iff $P(t_1,\ldots,t_n)$ holds -- and the values of $P'$ at time points other than $0$ will have no impact on the semantics.

A rule has the form $\wedge_i\alpha_i\rightarrow\alpha$, where $\alpha$ and each $\alpha_i$ are temporal atoms.
Atom $\alpha$ is called the \emph{head} of the rule, and $\wedge_i\alpha_i$ the \emph{body}.
Rules are assumed to be \emph{safe}: each variable in the head must occur in the body.
A \emph{program} is a set of rules.

If a predicate symbol occurs in an atom in the head of a rule with non-empty body, we call that predicate \emph{intensional}, or an IDB predicate.
A predicate that is defined only through rules with empty body is called \emph{extensional}, or an EDB predicate.
An atom is extensional/intensional (or EDB/IDB atom) according to whether its predicate symbol is extensional or intensional.

A term, atom, rule, or program is \emph{ground} if it contains no variables.
We write $\var\alpha$ for the set of variables occurring in an atom, and extend this function homomorphically to rules and sets.
A \emph{fact} is a function-free ground atom.

Rules are instantiated by means of \emph{substitutions}, which are functions mapping variables to terms of the expected sort.
The \emph{support} of a substitution $\theta$ is the set $\supp\theta=\{X\mid\theta(X)\neq X\}$.
We consider only substitutions with finite support, and write $\theta=\subst{X_1:=t_1,\ldots,X_n:=t_n}$ for the substitution mapping each variable $X_i$ to the term $t_i$, and leaving all remaining variables unchanged.
\begin{newtext}We call each expression $X_i:=t_i$ a \emph{binding}.\end{newtext}
A substitution is \emph{ground} if every variable in its support is mapped to a constant.
An \emph{instance} $r'=r\theta$ of a rule $r$ is obtained by simultaneously replacing every variable $X$ in $r$ by $\theta(X)$ and computing any additions of temporal constants.

A \emph{temporal query} is a pair $Q=\tuple{P,\Pi}$ where $\Pi$ is a program and $P$ is an IDB atom in the language underlying $\Pi$.
We do not require $P$ to be ground, and typically the temporal parameter is uninstantiated.\footnote{The most common exception is if $P$ represents a property that does not depend on time, which normally would be written using a rigid predicate. 
  In our framework, this would mean that $P$'s temporal parameter would be instantiated to $0$.}

A \emph{dataset} is an indexed family of sets of EDB facts (\emph{input facts}), each intuitively corresponding to the facts delivered by a data stream to a reasoning system at each time point.
\begin{definition}
  A \emph{dataset} is a family $D=\{D|_\tau\mid\tau\in\NN\}$, where $D|_\tau$ is a set of EDB facts with timestamp $\tau$.
\end{definition}
We call $D|_\tau$ the \emph{$\tau$-slice} of $D$.
For each $\tau$, we also consider $D$'s $\tau$-history $D_\tau=\bigcup\{D|_{\tau'}\mid\tau'\leq\tau\}$.

From these definitions it follows that $D|_\tau=D_\tau\setminus D_{\tau-1}$ for every $\tau$, and that $D_\tau$ contains only facts whose temporal argument is at most $\tau$.
By convention, $D_{-1}=\emptyset$.

\paragraph{Semantics}
The semantics of Temporal Datalog is a variant of the standard semantics based on Herbrand models.
A Herbrand interpretation $I$ for Temporal Datalog is a set of facts.
If $\alpha$ is an atom with no variables, then we define $\bar\alpha$ as the fact obtained from $\alpha$ by evaluating its temporal term.
We say that $I$ satisfies $\alpha$, $I\models\alpha$, if $\bar\alpha\in I$.
The extension of the notion of satisfaction to the whole language follows the standard construction, and the definition of entailment is the standard one.

An \emph{answer} to a query $Q=\tuple{P,\Pi}$ over a dataset $D$ is a ground substitution $\theta$ whose domain is the set of variables in $P$, satisfying $\Pi\cup D\models P\theta$.
In the context of continuous query answering, we are interested in the case where $D$ is a $\tau$-history of some data stream, which changes with time.
We denote the set of all answers to $Q$ over $D_\tau$ as $\dans$.

We illustrate the extension we propose with a Temporal Datalog program, which is a small variant of Example 1 in~\cite{Motik}.
We will return to this example throughout our presentation.

\begin{example}
  \label{ex:toy}
  A set of wind turbines are scattered throughout the North Sea.
  Each turbine is equipped with a sensor that continuously sends temperature readings $\m{Temp}(\mathit{Device},\mathit{Level},\mathit{Time})$ to a data centre.
  The data centre tracks activation of cooling measures in each turbine, recording malfunctions and shutdowns by means of the following program $\Piex$.
  \begin{align*}
    \m{Temp}(X,\m{high},T) &\rightarrow \m{Flag}(X,T) \\
    \m{Flag}(X,T) \land \m{Flag}(X, T+1) &\rightarrow \m{Cool}(X,T+1) \\
    \m{Cool}(X,T) \land \m{Flag}(X, T+1) &\rightarrow \m{Shdn}(X,T+1) \\
    \m{Shdn}(X,T) &\rightarrow \m{Malf}(X,T-2)
  \end{align*}

  Consider the query $\Qex=\tuple{\m{Malf}(X,T),\Piex}$.
  If $D_0=\{\m{Temp}(\m{wt25},\m{high},0)\}$, then at time point $0$ \begin{newtext}we can infer $\m{Flag(wt25,0)}$, but\end{newtext} there is no output for $\Qex$.
  If $\m{Temp}(\m{wt25},\m{high},1)$ arrives to $D$ at time point $1$, then $D_1=D_0\cup\{\m{Temp}(\m{wt25},\m{high},1)\}$, and \begin{newtext}we can infer $\m{Flag(wt25,1)}$ and $\m{Cool(wt25,1)}$, but\end{newtext} there still is no answer to $\Qex$.
  Finally, the arrival of $\m{Temp}(\m{wt25},\m{high},2)$ to $D$ at time point $2$ yields $D_2=D_1\cup\{\m{Temp}(\m{wt25},\m{high},2)\}$, allowing us to infer \begin{newtext}$\m{Flag(wt25,2)}$, $\m{Cool(wt25,2)}$, $\m{Shdn(wt25,2)}$ and\end{newtext} $\m{Malf}(\m{wt25},0)$.
  Then $\{X:=\m{wt25},T:=0\}\in\ans\Qex D2$.
  \eoe
\end{example}

\subsection{SLD-resolution}
\label{sec:SLD}

We now revisit some concepts from SLD-resolution, adapted from~\cite{BenAri2012,Lloyd1984}.

A \emph{literal} is an atom or its negation.
Atoms are also called \emph{positive} literals, and a negated atom is a \emph{negative} literal.
A \emph{definite clause} is a disjunction of literals containing at most one positive literal.
In the case where all literals are negative, the clause is a \emph{goal}.
We use the standard rule notation for writing definite clauses.

\begin{definition}
  Given two substitutions $\theta=\subst{X_1:=t_1,\ldots,X_m:=t_m}$ and $\sigma=\subst{Y_1:=u_1,\ldots,Y_n:=u_n}$, their \emph{composition} is the substitution $\theta\sigma$ obtained from $$\subst{X_1:=t_1\sigma,\ldots,X_m:=t_m\sigma,Y_1:=u_1,\ldots,Y_n:=u_n}$$ by (i) deleting any binding where $t_i\sigma=X_i$ and (ii) deleting any binding $Y_j:=u_j$ where $Y_j\in\{X_1,\ldots,X_m\}$.
\end{definition}
\begin{newtext}%
Condition~(i) removes redundant identity bindings.
Condition~(ii) ensures that,
\end{newtext}%
for every atom $\alpha$, $\alpha(\theta\sigma)=(\alpha\theta)\sigma$.

\begin{definition}
  Two atomic formulas $P(\vec X)$ and $P(\vec Y)$ are \emph{unifiable} if there exists a substitution $\theta$, also called a \emph{unifier}, such that $P(\vec X)\theta=P(\vec Y)\theta$.

  A unifier $\theta$ of $P(\vec X)$ and $P(\vec Y)$ is called a \emph{most general unifier (mgu)} if for each unifier $\sigma$ of $P(\vec X)$ and $P(\vec Y)$ there exists a substitution $\gamma$ such that $\sigma=\theta\gamma$.
\end{definition}
It is well known that there always exist several mgus of any two unifiable atoms, and that they are unique up to renaming of variables.

Recall that a \emph{goal} is a clause of the form $\neg\wedge_j\beta_j$.
If $C$ is a rule $\wedge_i\alpha_i\to\alpha$, $G$ is a goal $\neg\wedge_j\beta_j$ with $\var G\cap\var C=\emptyset$, and $\theta$ is an mgu of $\alpha$ and $\beta_k$, then the \emph{resolvent} of $G$ and $C$ using $\theta$ is the goal $\neg\left(\bigwedge_{j<k}\beta_j\wedge\bigwedge_i\alpha_i\wedge\bigwedge_{j>k}\beta_j\right)\theta$.

If $\Pi$ is a program and $G$ is a goal, an \emph{SLD-derivation} of $\Pi\cup\{G\}$ is a (finite or infinite) sequence $G_0,G_1,\ldots$ of goals with $G=G_0$, a sequence $C_1,C_2,\ldots$ of $\alpha$-renamings\footnote{\begin{newtext}I.e.~program clauses from $\Pi$ whose variables have been renamed.\end{newtext}} of program clauses of $\Pi$ and a sequence $\theta_1,\theta_2,\ldots$ of substitutions such that $G_{i+1}$ is the resolvent of $G_i$ and $C_{i+1}$ using $\theta_{i+1}$.
A finite SLD-derivation of $\Pi\cup\{G\}$ where the last goal \begin{newtext}$G_n$\end{newtext} is a contradiction ($\Box$) is called an \emph{SLD-refutation} of $\Pi\cup\{G\}$ of length $n$, and the substitution obtained by restricting the composition of $\theta_1,\ldots,\theta_n$ to the variables occurring in $G$ is called a \emph{computed answer} of $\Pi\cup\{G\}$.

\begin{proposition}[Soundness]
  Let $\Pi$ be a program, $G$ be a goal, and suppose that there is an SLD-refutation of $\Pi\cup\{G\}$ with computed answer $\theta$.
  Then $\Pi\models\forall(\neg G\theta)$.
\end{proposition}

\begin{proposition}[Completeness]
  Let $\Pi$ be a program, $G$ be a goal, and suppose that $\Pi\models\forall(\neg G\theta)$.
  Then there is an SLD-refutation of $\Pi\cup\{G\}$ with computed answer $\theta$.
\end{proposition}

\begin{newtext}
The notation $\forall(\neg G\theta)$ stands for the formula $\forall x_1\ldots x_n.\neg G\theta$, where $\var{G\theta}=\{x_1,\ldots,x_n\}$.
Since $G$ is of the form $\neg\wedge_j\beta_j$, this formula can be written equivalently as $\forall x_1\ldots x_n.\wedge_j(\beta_j\theta)$ -- i.e., $\Pi\models\forall(\neg G\theta)$ in the previous two theorems states that $\Pi$ entails any instance of any atom in $G$.
\end{newtext}

There are two degrees of freedom when trying to build SLD-derivations: which literal to choose in the goal at any given point, and which clause to unify it with (in case there are several possibilities).
The choice of the goal is dictated by a \emph{computation rule}; for a given computation rule $R$, the \emph{SLD-tree} for $\Pi$ and $G$ is the tree containing $G$, and where each node contains one descendent for each possible resolvent of the literal chosen by $R$ in that node and the head of a
rule in $\Pi$.

\begin{proposition}[Independence of the computation rule]
  Let $\Pi$ be a program and $G$ be a goal.
  Then every SLD-tree for $\Pi$ and $G$ has the same (finite or infinite) number of branches ending with the empty clause.
\end{proposition}
This result states that the computation rule does not affect whether one can find an SLD-refutation for $\Pi$ and $G$.

\section{Hypothetical answers}
\label{sec:awp}

In our running example, $\m{Temp(wt25,high,0)}$ being delivered at time instant $0$ yields some evidence that $\m{Malf(wt25,0)}$ may turn out to be true.
Later, we may receive further evidence as in the example (the arrival of $\m{Temp(wt25,high,1)}$), or we might find out that this fact will not be true (if $\m{Temp(wt25,high,1)}$ does not arrive).

We propose a theory that also allows for such \emph{hypothetical answers} to a continuous query: if some substitution can become an answer as long as some facts in the future are true, then we output this information.
In this way we can lessen the negative effects of unbound wait.
Hypothetical answers can also refer to future time points: in our example, \subst{X:=\m{wt25},T:=2} would also be output at time point 0 as a substitution that may prove to be an actual answer to the query \tuple{\m{Shdn}(X,T),\Piex} when further information arrives.
 
Our formalism uses ideas from multi-valued logic, where some substitutions correspond to answers (true), others are known not to be answers (false), and others are consistent with the available data, but can not yet be shown to be true or false.
In our example, the fact $\m{Malf(wt25,0)}$ is consistent with the data at time point $0$, and thus ``possible''; it is also consistent with the data at time point $1$, and thus ``more possible''; and it finally becomes (known to be) true at time point 2.

As already motivated, we want answers to give us not only the substitutions that make the query goal true, but also ones that make the query goal possible in the following sense: they depend both on facts that have been delivered by the data stream and on facts that still may be delivered by the data stream (i.e., those whose timestamp is greater than $\tau$).

\begin{definition}
  A ground atom $P(t_1,\ldots,t_n)$ is \emph{future-possible for $\tau$} if $\tau<t_n$.
\end{definition}

\begin{definition}
  \label{defn:hypothetical}
  A \emph{hypothetical answer} to query \begin{newtext}$Q=\tuple{P,\Pi}$\end{newtext} over $D_\tau$ is a pair $\tuple{\theta,H}$, where $\theta$ is a substitution and $H$ is a finite set of ground EDB atoms (the hypotheses) such that:
  \begin{itemize}
  \item $\supp\theta=\var P$;
  \item $H$ only contains atoms future-possible for $\tau$;
  \item $\Pi\cup D_\tau\cup H\models P\theta$;
  \item $H$ is minimal with respect to set inclusion.
  \end{itemize}
  $\dhans$ is the set of hypothetical answers to $Q$ over $D_\tau$.
\end{definition}

Intuitively, a hypothetical answer \tuple{\theta,H} states that $P\theta$ holds if all facts in $H$ are ever delivered by the data stream.
Thus, $P\theta$ is currently backed up by the information available.
In particular, if $H=\emptyset$ then $P\theta$ is an answer in the standard sense (it is a known fact).

\begin{proposition}
  \label{prop:HA-answer}
  Let $Q=\tuple{P,\Pi}$ be a query, $D$ be a data stream and $\tau$ be a time instant.
  If $\tuple{\theta,\emptyset}\in\dhans$, then $\theta\in\dans$.
\end{proposition}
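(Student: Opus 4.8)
The plan is to unwind the two relevant definitions and observe that the hypothetical-answer condition collapses exactly to the answer condition when the hypothesis set is empty. Concretely, the hypothesis is that $\tuple{\theta,\emptyset}\in\dhans$, which by Definition~\ref{defn:hypothetical} (instantiated with $H=\emptyset$) gives us two facts I intend to use: first, $\supp\theta=\var P$, and second, the entailment $\Pi\cup D_\tau\cup\emptyset\models P\theta$. The goal is to show $\theta\in\dans$, which by the definition of \emph{answer} requires that $\theta$ be a ground substitution whose domain is the set of variables in $P$, and that $\Pi\cup D_\tau\models P\theta$.

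\textbf{The steps, in order.} First I would note that $\Pi\cup D_\tau\cup\emptyset=\Pi\cup D_\tau$, so the entailment $\Pi\cup D_\tau\cup H\models P\theta$ supplied by the hypothesis is literally the entailment $\Pi\cup D_\tau\models P\theta$ demanded by the definition of answer; no model-theoretic work is needed here beyond this set-theoretic simplification. Second, I would match up the conditions on the substitution: the definition of hypothetical answer gives $\supp\theta=\var P$, and the definition of answer asks that the domain of $\theta$ be the set of variables in $P$, so these coincide once we identify $\supp\theta$ with the domain of $\theta$. Third, the definition of answer additionally requires $\theta$ to be \emph{ground}, and I would need to confirm that the hypothetical answer is indeed witnessed by a ground substitution.

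\textbf{The step I expect to be the main obstacle} is precisely this last groundness check, since Definition~\ref{defn:hypothetical} as stated constrains the support of $\theta$ but does not in its displayed bullets explicitly stipulate that $\theta$ is ground, whereas the definition of answer does. I would resolve this by appealing to the entailment condition together with safety: because $P$ is the query atom and the entailment $\Pi\cup D_\tau\models P\theta$ must hold over a Herbrand interpretation consisting of facts, $P\theta$ must be (equivalent to) a fact, hence function-free and ground on the variables of $P$; combined with $\supp\theta=\var P$ this forces $\theta$ to map every variable of $P$ to a constant, i.e.\ to be ground. Once groundness is secured, the three requirements of the answer definition are all met and the proof is complete; the whole argument is a short unfolding, with the only subtlety being to make the implicit groundness of a hypothetical answer's substitution explicit.
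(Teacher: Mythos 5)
Your proof is correct and follows essentially the same route as the paper's: instantiate Definition~\ref{defn:hypothetical} with $H=\emptyset$ and observe that the entailment condition collapses to the definition of answer. Your extra care about groundness of $\theta$ is a reasonable point the paper's one-line proof glosses over, but it does not change the argument.
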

\begin{proof}
  If $\tuple{\theta,H}\in\dhans$, then $\Pi\cup D_\tau\cup H\models P\theta$.
  When $H=\emptyset$, this reduces to $\Pi\cup D_\tau\models P\theta$, which coincides with the definition of answer.
\end{proof}

We can generalize this proposition, formalizing the intuition we gave for the definition of hypothetical answer.

\begin{proposition}
  \label{prop:HA-char}
  Let $Q=\tuple{P,\Pi}$ be a query, $D$ be a data stream and $\tau$ be a time instant.
  If $\tuple{\theta,H}\in\dhans$, then there exist a time point $\tau'\geq\tau$ and a data stream $D'$ such that $D_\tau=D'_\tau$ and $\theta\in\ans Q{D'}{\tau'}$.
\end{proposition}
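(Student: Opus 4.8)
The plan is to realise the hypotheses $H$ by an actual continuation of the stream and then to choose $\tau'$ large enough that the extended history already entails $P\theta$. Concretely, I would set $D' = D_\tau \cup H$, viewing this set of EDB facts as the output of a data stream: this is legitimate because $H$ consists of ground EDB temporal atoms and $D_\tau$ of EDB facts, so their union is again a set of EDB facts. Since $H$ is finite, I can let $\tau'$ be the maximum of $\tau$ and all time stamps occurring in the atoms of $H$, which guarantees $\tau' \geq \tau$ as required.

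The first thing to check is $D_\tau = D'_\tau$. By definition $D'_\tau$ consists of those facts of $D' = D_\tau \cup H$ whose temporal argument is at most $\tau$. Every atom of $H$ has time stamp strictly greater than $\tau$ (second clause of Definition~\ref{defn:hypothetical}), so none of them survives the restriction, while every fact of $D_\tau$ already has time stamp at most $\tau$; hence $D'_\tau = D_\tau$.

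Next I would verify that $\theta \in \ans Q{D'}{\tau'}$, i.e.\ that $\Pi \cup D'_{\tau'} \models P\theta$. By the choice of $\tau'$, every atom of $H$ has time stamp at most $\tau'$, and the same holds for the facts of $D_\tau$; therefore $D'_{\tau'} = D_\tau \cup H$. The third clause of Definition~\ref{defn:hypothetical} gives $\Pi \cup D_\tau \cup H \models P\theta$, which is exactly $\Pi \cup D'_{\tau'} \models P\theta$. Together with $\supp\theta = \var P$ from the first clause, this is precisely the definition of $\theta$ being an answer to $Q$ over $D'_{\tau'}$.

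There is essentially no hard step here: the content of the statement is just the observation that the hypotheses can always be satisfied by some future of the stream, and the construction $D' = D_\tau \cup H$ does exactly that while leaving the $\tau$-history unchanged. The only points that require care are the bookkeeping on time stamps — that the atoms of $H$ lie strictly after $\tau$ but no later than the chosen $\tau'$, so that the two history computations $D'_\tau$ and $D'_{\tau'}$ land on $D_\tau$ and $D_\tau \cup H$ respectively — and the observation that $D'$ qualifies as a data stream.
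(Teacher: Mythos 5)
Your construction is essentially the paper's own: the paper takes $D' = D\cup H$ and $\tau'$ the largest timestamp in $H$, leaving the verification as "straightforward," and you take $D' = D_\tau\cup H$ with $\tau' = \max(\tau,\dots)$ and carry out that verification explicitly (your $\max$ with $\tau$ also cleanly covers the case $H=\emptyset$). The proof is correct and follows the same route.
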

\begin{proof}
  Let $D'$ be the data stream such that:
  \begin{itemize}
  \item $D'|_t=D|_t$ for $t\leq\tau$;
  \item $D'|_t$ contains the elements of $H$ with timestamp $t$ for $t>\tau$.
  \end{itemize}
  Let $\tau'$ be the highest timestamp in $H$.
  It is straightforward to verify that $D'$ satisfies the thesis.
\end{proof}

\begin{example}
  \label{ex:HA}
  We illustrate these concepts in the context of Example~\ref{ex:toy}.
  
  Consider $\theta=\subst{X:=\m{wt25},T:=0}$.
  Since $\m{Temp}(\m{wt25},\m{high},0)\in D|_0$,
  Then $$\tuple{\theta,\{\m{Temp}(\m{wt25},\m{high},1),\m{Temp}(\m{wt25},\m{high},2)\}}\in\hans\Qex D0\,.$$
  Since $\m{Temp}(\m{wt25},\m{high},1)\in D|_1$, we conclude that $\tuple{\theta,\{\m{Temp}(\m{wt25},\m{high},2)\}}\in\hans\Qex D1$.
  Finally, $\m{Temp}(\m{wt25},\m{high},2)\in D|_2$, so $\tuple{\theta,\emptyset}\in\hans\Qex D2$.
  This answer has no hypotheses, and indeed $\theta\in\ans\Qex D2$.

  Now consider $\theta'=\subst{X:=\m{wt42},T:=1}$ for another constant $\m{wt42}$.
  Then $\hans\Qex D0$ also includes e.g. $\tuple{\theta',\{\m{Temp}(\m{wt42},\m{high},1),\m{Temp}(\m{wt42},\m{high},2),\m{Temp}(\m{wt42},\m{high},3)\}}$.
  However, since $D|_1$ does not contain $\m{Temp}(\m{wt42},\m{high},1)$, there is no element $\tuple{\theta',H'}\in\hans\Qex D\tau$ for $\tau\geq1$.
  \eoe
\end{example}

Hypothetical answers $\tuple{\theta,H}\in\dhans$ where $H\neq\emptyset$ can be further split into two kinds: those that are supported by some fact(s) delivered by the datastream, and those for which there is no evidence whatsover -- they only depend on facts whose truth value is unknown.
For the former, $\Pi\cup H\not\models P\theta$: they rely on some fact from $D_\tau$.
This is the class of answers that interests us, as there is non-trivial information in saying that they may become true.

\begin{definition}
  \label{def:supported}
  Let $Q=\tuple{P,\Pi}$ be a query, $D$ be a data stream and $\tau$ be a time instant.
  A non-empty set of ground EDB atoms $E\subseteq D_\tau$ is \emph{evidence} supporting $\tuple{\theta,H}\in\dhans$ if $E$ is a minimal set satisfying $\Pi\cup E\cup H\models P\theta$.
  A \emph{supported answer} to $Q$ over $D_\tau$ is a triple \tuple{\theta,H,E} \begin{newtext}such that $\Pi\cup H\not\models P\theta$\end{newtext} where $E$ is evidence supporting \tuple{\theta,H}.
  We denote the set of supported answers to $Q$ over $D_\tau$ by $\dsans$.
\end{definition}

Since \begin{newtext}we are working with finite sets,\end{newtext} if $\tuple{\theta,H}\in\dhans$ and $\Pi\cup E\cup H\models P\theta$, then there exists a set $E'$ such that \tuple{\theta,H,E'} is a supported answer to $Q$ over $D_\tau$.
However, in general, several such sets $E'$ may exist.
As a consequence, Propositions~\ref{prop:HA-answer} and~\ref{prop:HA-char} generalize to supported answers in the obvious way.
\begin{newtext}%
The condition $\Pi\cup H\not\models P\theta$ ensures that the evidence must be used to infer the answer.
\end{newtext}%
\begin{example}
  \label{ex:SA}
  In Example~\ref{ex:HA}, the hypothetical answer \tuple{\theta,\{\m{Temp}(\m{wt25},\m{high},1),\m{Temp}(\m{wt25},\m{high},2)\}} is supported by the evidence $\{\m{Temp}(\m{wt25},\m{high},0)\}$, while \tuple{\theta,\{\m{Temp}(\m{wt25},\m{high},2)\}} is supported by $\{\m{Temp}(\m{wt25},\m{high},0),$ $\m{Temp}(\m{wt25},\m{high},1)\}$.

  Since there is no evidence for \tuple{\theta',\{\m{Temp}(\m{wt42},\m{high},1),\m{Temp}(\m{wt42},\m{high},2),\m{Temp}(\m{wt42},\m{high},3)\}}, this answer is not supported.
  \eoe
\end{example}

This example illustrates that unsupported hypothetical answers are not very informative: it is the existence of supporting evidence that distinguishes interesting hypothetical answers from any arbitrary future fact.

However, it is useful to consider even unsupported hypothetical answers in order to develop incremental algorithms to compute supported answers: the sequence of sets $\Theta^E_\tau=\{\theta\mid\tuple{\theta,H,E}\in\dsans\mbox{ for some $H,E$}\}$ is not monotonic, as at every time point new unsupported hypothetical answers may get evidence and supported hypothetical answers may get rejected.
The sequence $\Theta^H_\tau=\{\theta\mid\tuple{\theta,H}\in\dhans\mbox{ for some $H$}\}$, on the other hand, is anti-monotonic, as the following results state.

\begin{proposition}
  \label{prop:HA-split}
  Let $Q=\tuple{P,\Pi}$ be a query, $D$ be a data stream and $\tau$ be a time instant.
  If $\tuple{\theta,H}\in\dhans$, then there exists $H^0$ such that $\tuple{\theta,H^0}\in\hans QD{-1}$ and $H=H^0\setminus D_\tau$.
  Furthermore, if $H\neq H^0$, then $\tuple{\theta,H,H^0\setminus H}\in\dsans$.
\end{proposition}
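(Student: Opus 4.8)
The plan is to prove the claim in two stages, constructing the ``full'' hypothesis set $H^0$ that a fresh observer (with empty history, i.e.\ at time $-1$) would need, and then relating $H$ to $H^0$ by subtracting off exactly the facts that have meanwhile arrived in $D_\tau$. The guiding intuition, already articulated in the text preceding Definition~\ref{def:supported}, is that a hypothetical answer at time $\tau$ rests on two disjoint pools of atoms: the future hypotheses $H$ (timestamps $>\tau$) and the past/present EDB facts from $D_\tau$ that are actually used in the derivation. The set $H^0$ should be the union of these two pools, and the content of the proposition is that this union is itself a valid hypothesis set at time $-1$, from which $H$ is recovered by removing whatever lies in $D_\tau$.

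First I would take $\tuple{\theta,H}\in\dhans$, so that $\supp\theta=\var P$, $H$ is minimal, $\Pi\cup D_\tau\cup H\models P\theta$, and every atom of $H$ has timestamp $>\tau$. From the entailment $\Pi\cup D_\tau\cup H\models P\theta$ I would extract a \emph{finite} set $E\subseteq D_\tau$ of EDB facts that are genuinely used, i.e.\ with $\Pi\cup E\cup H\models P\theta$; such a finite $E$ exists by compactness together with the fact that only finitely many facts participate in any derivation of $P\theta$. I would then set $H^0=H\cup E$. Because every fact in $D_{-1}=\emptyset$ is vacuous, and every atom in $H^0$ is a ground EDB temporal atom with timestamp $>-1$, the set $H^0$ satisfies the hypotheses-format requirement of Definition~\ref{defn:hypothetical} relative to $\tau=-1$; combined with $\Pi\cup D_{-1}\cup H^0=\Pi\cup H^0\supseteq\Pi\cup E\cup H$ we get $\Pi\cup D_{-1}\cup H^0\models P\theta$. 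After thinning $H^0$ to a set minimal for inclusion, we obtain $\tuple{\theta,H^0}\in\hans QD{-1}$.

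The crux is then the identity $H=H^0\setminus D_\tau$, which is where I expect the real work to lie. I would prove both inclusions. For $H\subseteq H^0\setminus D_\tau$: every atom of $H$ has timestamp $>\tau$, hence is not in $D_\tau$ (whose facts have timestamp $\le\tau$), and $H\subseteq H^0$ by construction, so $H\subseteq H^0\setminus D_\tau$. For $H^0\setminus D_\tau\subseteq H$: an atom $\alpha\in H^0$ not in $D_\tau$ must, by $H^0=H\cup E$ and $E\subseteq D_\tau$, already lie in $H$. The genuinely delicate point is that the \emph{minimality} of $H^0$ at time $-1$ must be shown compatible with the minimality of $H$ at time $\tau$: one has to argue that thinning $H\cup E$ to a minimal hypothesis set removes only elements of $E$ (i.e.\ of $D_\tau$) and never any element of $H$, for otherwise $H$ itself would fail to be minimal at time $\tau$. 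I would establish this by contradiction, showing that if a minimal $H^0$ dropped some $\beta\in H$, then $\Pi\cup E\cup(H\setminus\{\beta\})\models P\theta$, whence $\Pi\cup D_\tau\cup(H\setminus\{\beta\})\models P\theta$, contradicting minimality of $H$.

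Finally I would dispatch the ``furthermore'' clause. Assuming $H\neq H^0$, the difference $H^0\setminus H$ is a non-empty subset of $E\subseteq D_\tau$, so it is a non-empty set of facts drawn from $D_\tau$; and since $\Pi\cup(H^0\setminus H)\cup H=\Pi\cup H^0\models P\theta$, the set $H^0\setminus H$ satisfies $\Pi\cup(H^0\setminus H)\cup H\models P\theta$. Passing to a minimal such subset yields evidence $E'\subseteq H^0\setminus H\subseteq D_\tau$ supporting $\tuple{\theta,H}$ in the sense of Definition~\ref{def:supported}, so that $\tuple{\theta,H,E'}\in\dsans$; with the minimality argument above one checks that $E'=H^0\setminus H$ works directly, giving $\tuple{\theta,H,H^0\setminus H}\in\dsans$ as stated.
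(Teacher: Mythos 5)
Your proof is correct and follows essentially the same route as the paper's: build $H^0$ from $H$ together with facts of $D_\tau$ used in the derivation, invoke the minimality of $H$ at time $\tau$ to show that minimizing $H^0$ can only discard elements of $D_\tau$ and never of $H$, and then use the minimality of $H^0$ to conclude that $H^0\setminus H$ is itself minimal evidence. The only differences are cosmetic (you thin $H\cup E$ to a minimal set and check $H$ survives, whereas the paper takes a minimal superset of $H$ inside $D_\tau\cup H$ and then verifies global minimality), so no further comment is needed.
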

\begin{proof}
  Recall that $D_{-1}=\emptyset$ by convention.
  If $\tuple{\theta,H}\in\dhans$, then $\Pi\cup D_\tau\cup H\models P\theta$.
  Since $D_\tau\cup H$ is finite, there is a minimal subset $H^0$ of $D_\tau\cup H$ with the property that $H\subseteq H^0$ and $\Pi\cup H^0\models P\theta$.
  Clearly $H=H^0\setminus D_\tau$.

  Assume that $H^-\subseteq H^0$ is also such that $\Pi\cup H^-\models P\theta$.
  Then $\Pi\cup D_\tau\cup(H^-\setminus D_\tau)\models P\theta$; but $\tuple{\theta,H}\in\dhans$, so $H\subseteq H^-\setminus D_\tau$ and therefore also $H\subseteq H^-$.
  By definition of $H^0$, this implies that $H^0\subseteq H^-$, hence $\tuple{\theta,H^0}\in\hans QD{-1}$.

  Finally, if $E\subseteq H^0\setminus H$ is evidence supporting \tuple{\theta,H}, then $\Pi\cup E\cup H\models P\theta$, hence $H^0\subseteq E\cup H$, since $\tuple{\theta,E\cup H}\in\hans QD{-1}$.
\end{proof}

\begin{proposition}
  \label{prop:HA-split-general}
  Let $Q=\tuple{P,\Pi}$ be a query, $D$ be a data stream and $\tau$ be a time instant.
  If $\tuple{\theta,H}\in\dhans$ and $\tau'<\tau$, then there exists $\tuple{\theta,H'}\in\hans QD{\tau'}$ such that $H=H'\setminus\left(D_\tau\setminus D_{\tau'}\right)$.
\end{proposition}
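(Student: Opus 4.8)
The plan is to follow the template of the proof of Proposition~\ref{prop:HA-split}, of which this statement is the special case $\tau'=-1$ (there $D_{-1}=\emptyset$ and $D_\tau\setminus D_{-1}=D_\tau$). The first step is bookkeeping: since $\tau'<\tau$ we have $D_{\tau'}\subseteq D_\tau$, so $D_\tau=D_{\tau'}\cup(D_\tau\setminus D_{\tau'})$, and I set $F=(D_\tau\setminus D_{\tau'})\cup H$. Because the atoms of $H$ have time stamp $>\tau$ while those of $D_\tau\setminus D_{\tau'}$ have time stamp in the interval $(\tau',\tau]$, this union is disjoint, every element of $F$ is a ground EDB atom with time stamp $>\tau'$, and $F$ is finite (as $D_\tau$ is). From $\tuple{\theta,H}\in\dhans$ we have $\Pi\cup D_\tau\cup H\models P\theta$, and since $D_{\tau'}\cup F=D_\tau\cup H$ this reads $\Pi\cup D_{\tau'}\cup F\models P\theta$.

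Next, as set inclusion is well-founded and $F$ is finite, I would pick a subset $H'\subseteq F$ that is minimal with the property $\Pi\cup D_{\tau'}\cup H'\models P\theta$. This $H'$ is a legitimate hypothesis set for time $\tau'$: its atoms are EDB with time stamp $>\tau'$ (being in $F$), the condition $\supp\theta=\var P$ is inherited from the original answer, and $H'$ is minimal in the sense of Definition~\ref{defn:hypothetical}, because every proper subset of $H'$ is again a subset of $F$ and so cannot witness the entailment without contradicting the minimality of $H'$ inside $F$. Hence $\tuple{\theta,H'}\in\hans QD{\tau'}$.

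The remaining---and main---step is the identity $H=H'\setminus(D_\tau\setminus D_{\tau'})$. Using the disjoint decomposition of $F$, I write $H'=H'_1\cup H'_2$ with $H'_1=H'\cap(D_\tau\setminus D_{\tau'})$ and $H'_2=H'\cap H$. Since $H'_1\subseteq D_\tau\setminus D_{\tau'}\subseteq D_\tau$ and $D_{\tau'}\subseteq D_\tau$, absorbing both $H'_1$ and $D_{\tau'}$ into $D_\tau$ turns $\Pi\cup D_{\tau'}\cup H'\models P\theta$ into $\Pi\cup D_\tau\cup H'_2\models P\theta$. Now $H'_2\subseteq H$ and its atoms have time stamp $>\tau$, so this last entailment makes $H'_2$ a candidate set at time $\tau$; the minimality of $H$ in $\dhans$ then forces $H'_2=H$. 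Thus $H\subseteq H'$, and removing $D_\tau\setminus D_{\tau'}$ from $H'$ deletes precisely $H'_1$ while leaving $H'_2=H$ (the disjointness $H\cap(D_\tau\setminus D_{\tau'})=\emptyset$ being immediate from the time-stamp bounds), which is exactly the claimed equality.

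I expect the only delicate point to be this last step, where two different minimality conditions interact: the minimality of $H'$ relative to $D_{\tau'}$ and that of $H$ relative to $D_\tau$. The key observation that unlocks it is that $H'_1$ consists of facts already present in $D_\tau$, so it can be absorbed into the data history; this reduces the entailment available for $H'$ at $\tau'$ to one for $H'_2$ at $\tau$, where the minimality of $H$ is exactly what is needed. Beyond this, the argument uses only monotonicity of entailment and well-foundedness of set inclusion, with no appeal to the operational (SLD) machinery.
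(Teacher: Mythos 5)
Your proof is correct and follows essentially the same route as the paper, which simply instantiates the argument of Proposition~\ref{prop:HA-split} with the split $\Pi\cup D_{\tau'}$ versus $D_\tau\setminus D_{\tau'}$; your decomposition $F=(D_\tau\setminus D_{\tau'})\cup H$ and the appeal to the minimality of $H$ in $\dhans$ to force $H'\cap H=H$ are exactly the intended generalization. The only (immaterial) difference is that you take $H'$ minimal among subsets of $F$ and then derive $H\subseteq H'$, whereas the paper's template first fixes a minimal superset of $H$ and then verifies unrestricted minimality.
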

\begin{proof}
  Just as the proof of the previous proposition, but dividing $\Pi\cup D_\tau$ into   $\Pi\cup D_{\tau'}$ and $D_\tau\setminus D_{\tau'}$ instead of into $\Pi$ and $D_\tau$.
\end{proof}

Examples~\ref{ex:HA} and~\ref{ex:SA} also illustrate this property, with hypotheses turning into evidence as time progresses.
Since $D_{-1}=\emptyset$, Proposition~\ref{prop:HA-split} is a particular case of Proposition~\ref{prop:HA-split-general}.

In the next sections we show how to compute hypothetical answers and the corresponding sets of evidence for a given continuous query.

\section{Operational semantics via SLD-resolution}
\label{sec:sld}

The definitions of hypothetical and supported answers are declarative.
We now show how SLD-resolution can be adapted to algorithms that compute these answers.
We use standard results about SLD-resolution, see for example~\cite{Lloyd1984}.

We begin with a simple observation: since the only function symbol in our language is addition of
temporal parameters (which is invertible), we can always choose mgus that do not replace variables
in the goal with new ones.

\begin{lemma}
  \label{lem:mgu-vars}
  Let $\neg\wedge_i\alpha_i$ be a goal and $\wedge_j\beta_j\to\beta$ be a rule such that
  $\beta$ is unifiable with $\alpha_k$ for some $k$.
  Then there is an mgu $\theta=\subst{X_1:=t_1,\ldots,X_n:=t_n}$ of $\alpha_k$ and $\beta$ such that all variables occurring in $t_1,\ldots,t_n$ also occur in $\alpha_k$.
\end{lemma}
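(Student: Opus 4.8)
The plan is to build a suitable mgu directly, unifying $\alpha_k$ with $\beta$ position by position and orienting every binding so that no variable outside $\var{\alpha_k}$ ever enters the range of the substitution constructed so far. As in the definition of resolvent I assume the rule has been renamed apart, so that $\var{\alpha_k}\cap\var\beta=\emptyset$; this costs nothing, since we only need $\theta$ to exist. Since $\alpha_k$ and $\beta$ are unifiable they carry the same predicate symbol and arity, so I may write $\alpha_k=P(s_1,\ldots,s_m)$ and $\beta=P(u_1,\ldots,u_m)$ and reduce the task to solving the equations $s_i=u_i$, treating the object and temporal positions separately, since the two sorts never interact.

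For object positions there are no function symbols, so each equation relates variables and constants only. A constant on either side forces the opposing variable to be bound to that constant, contributing no variable to the range; two variables are unified by eliminating the one belonging to $\beta$ (or, if both belong to $\alpha_k$, by keeping either as representative). The point is that every equation pairs a term of $\alpha_k$ with a term of $\beta$, so a variable of $\beta$ is never linked to $\beta$-material alone: eliminating it always substitutes a constant or a variable of $\alpha_k$. Thus the invariant --- that each binding $Y:=t$ added to $\theta$ has $\var t\subseteq\var{\alpha_k}$ --- is preserved.

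The temporal positions are the real content, and the step I expect to be the main obstacle. Here an equation reads $T+k=S+l$, $T+k=n$, or $n=S+l$, with $T$ a time variable of $\alpha_k$, $S$ one of $\beta$, and $k,l,n$ integers; when a constant is present, the opposing variable is simply forced to a numeric value and the invariant holds trivially. The delicate case is $T+k=S+l$: solving for the variable of $\beta$ gives $S:=T+(k-l)$, and since time terms are closed under adding an arbitrary integer offset, $T+(k-l)$ is again a legal time term whose only variable, $T$, lies in $\var{\alpha_k}$. This is exactly where the invertibility of $+$ is used: whenever a variable of $\beta$ must be matched against a shifted term I can solve for it rather than for the goal variable, and a goal variable that does get substituted is expressed in terms of another goal variable shifted by an integer; hence no variable of $\beta$ is ever introduced into the range. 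Finally, the only freedom exercised anywhere was the orientation of variable-to-variable (and shifted-variable) equations, each choice differing from the alternative only by an invertible shift or renaming; this preserves most-generality, so the substitution $\theta$ collected over all positions is an mgu with $\var{t_i}\subseteq\var{\alpha_k}$ for every binding, as required.
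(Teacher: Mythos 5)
Your proof is correct in substance, but it takes a genuinely different route from the paper's. You construct the mgu from scratch by running a unification algorithm equation-by-equation with a fixed orientation policy (always eliminate the rule's variable in favour of a goal variable or a constant, solving $T+k=S+l$ for the $\beta$-side variable via the invertible shift). The paper instead starts from an \emph{arbitrary} mgu $\rho=\subst{X_1:=t_1,\ldots,X_n:=t_n}$ and post-composes it with a correcting substitution $\sigma$ (sending $Y_i:=X_i$ or $T_i:=X_i-k_i$ for each offending binding), then observes that $\sigma$ is invertible -- a generalized renaming -- so $\rho\sigma$ is still an mgu and now has its range inside $\var{\alpha_k}$. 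Both arguments hinge on exactly the same key fact, namely that the only ``function symbol'' is addition of an integer constant to a temporal variable, which is invertible, so variable-to-shifted-variable equations can be solved in either direction. The paper's version is more modular: it reuses the standard existence of mgus and only has to verify invertibility of one small substitution, whereas your version implicitly re-establishes correctness and most-generality of the whole unification procedure under your orientation policy (your closing sentence about ``each choice differing by an invertible shift or renaming'' is doing that work, and is the one place you should be more explicit). In exchange, your argument is constructive and doubles as an actual algorithm. Two minor caveats: you assume $\var{\alpha_k}\cap\var\beta=\emptyset$, which the lemma statement does not impose but which always holds in the SLD-derivations where the lemma is applied, so it is harmless in context; and your position-by-position decomposition needs the accumulated bindings to be propagated across positions (a repeated goal variable can be forced to a constant by a later equation), which your invariant does handle but which you state only loosely.
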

\begin{proof}
  Let $\rho=\subst{X_1:=t_1,\ldots,X_n:=t_n}$ be an mgu of $\alpha_k$ and $\beta$.
  For each $i$, $t_i$ can either be a variable $Y_i$ or a time expression $T_i+k_i$.
  First, iteratively build a substitution $\sigma$ as follows: for $i\in[1,\ldots,n]$, if $X_i$ occurs in $\alpha_k$ but $t_i$ does not and $\sigma$ does not yet include a replacement for the variable in $t_i$, extend $\sigma$ with $Y_i:=X_i$, if $t_i$ is $Y_i$, or $T_i:=X_i-k_i$, if $t_i$ is $T_i+k_i$.

  We now show that $\theta=\rho\sigma$ is an mgu of $\alpha_k$ and $\beta$ with the desired property.
  If $X:=t\in\theta$, then either (i)~$X$ is $X_i$ and $t$ is $t_i\sigma$ for some $i$ or (ii)~$X:=t\in\sigma$.
  In case~(i), by construction of $\sigma$ if $X_i$ occurs in $\alpha_k$ but $t_i$ includes a variable not in $\alpha_k$, then $\sigma$ replaces that variable with a term using only variables in $\alpha_k$. In case~(ii), by construction $X$ does not occur in $\alpha_k$.

  To show that $\theta$ is an mgu of $\alpha_k$ and $\beta$ it suffices to observe that $\sigma$ is invertible, with $\sigma^{-1}=\subst{X:=Y \mid Y:=X\in\sigma}\subst{X:=T-k\mid T:=X+k\in\sigma}$.
\end{proof}

Without loss of generality, we assume all mgus in SLD-derivations to have the property in Lemma~\ref{lem:mgu-vars}.

Our intuition is as follows: classical SLD-resolution computes substitutions that make a conjunction of atoms a logical consequence of a program by constructing SLD-derivations that end in the empty clause.
We relax this by allowing derivations to end with a goal if: this goal only refers to EDB predicates and all the temporal terms in it refer to future instants (possibly after further instantiation).
This makes the notion of derivation also dependent on a time parameter.

\begin{definition}
  An atom $P(t_1,\ldots,t_n)$ is a \emph{potentially future atom wrt $\tau$} if it is
  future-possible for $\tau$ or either $t_n$ contains a temporal variable.
\end{definition}
This concept generalizes the notion of future-possible atom for $\tau$ to possibly non-ground atoms.
In particular, any atom whose time parameter contains a variable is a potentially future atom, since
it may be instantiated to a time point in the future.

\begin{definition}
  An \emph{SLD-refutation with future premises} of \begin{newtext}$Q=\tuple{P,\Pi}$\end{newtext} over $D_\tau$ is a finite SLD-derivation of $\Pi\cup D_\tau\cup\{\neg P\}$ whose last goal only contains potentially future EDB atoms wrt $\tau$.

  If $\mathcal D$ is an SLD-refutation with future premises of $Q$ over $D_\tau$ with last goal $G=\neg\wedge_i\alpha_i$ and $\theta$ is the substitution obtained by restricting the composition of the mgus in $\mathcal D$ to $\var P$, then \tuple{\theta,\wedge_i\alpha_i} is a \emph{computed answer with premises} to $Q$ over $D_\tau$, denoted \dcans.
\end{definition}

\begin{example}
  Consider query $\Qex$ from Example~\ref{ex:toy} and $\tau=1$.
  There is an SLD-derivation of $\begin{newtext}\Pi_E\end{newtext}\cup D_1\cup\{\neg\m{Malf}(X,T)\}$ ending with $\leftarrow\m{Temp}(\m{wt25},\m{high},2)$, which contains a single potentially future EDB atom with respect to~$1$.
  Thus, $\cans\Qex D1\theta{\m{Temp}(\m{wt25},\m{high},2)}$ with $\theta={\subst{X:=\m{wt25},T:=0}}$.
  \eoe
\end{example}

As this example illustrates, computed answers with premises are the operational counterpart to hypothetical answers, with two caveats.
First, a computed answer with premises need not be ground: there may be some universally quantified variables in the last goal.
Second, $\wedge_i\alpha_i$ may contain redundant conjuncts, in the sense that they might not be needed to establish the goal.
We briefly illustrate these two features.

\begin{example}
  \label{ex:SLD-problems}
  In our running example, there is also an SLD-derivation of $\begin{newtext}\Pi_E\end{newtext}\cup D_1\cup\{\neg\m{Malf}(X,T)\}$ ending with the goal $\neg\bigwedge_{i=0}^2\m{Temp}(X,\m{high},T+i)$, which only contains potentially future EDB atoms with respect to~$1$.
  Thus also $\cans{Q_E}D1\emptyset{\bigwedge_{i=0}^2\m{Temp}(X,\m{high},T+i)}$.
  \eoe
\end{example}

\begin{example}
  Consider the following program $\Pi'$:
  \begin{align*}
    \m{P}(\m{a},T) &\rightarrow \m{R}(\m{a},T) &
    \m{P}(\m{a},T)\wedge\m{Q}(\m{a},T) &\rightarrow \m{R}(\m{a},T)
  \end{align*}
  and the query $Q'=\tuple{\m{R}(X,T),\Pi'}$.
  
  Let $D'|_0=\emptyset$.
  Then there is an SLD-derivation of $\Pi'\cup D'_0\cup\{\neg\m{R}(X,T)\}$ that ends with the goal $\neg\left(\m{P}(\m{a},T)\wedge\m{Q}(\m{a},T)\right)$, which only contains potentially future EDB atoms wrt $\tau$.
  Thus $\cans{Q'}{D'}0{\subst{X:=\m{a}}}{\m{P}(\m{a},T)\wedge\m{Q}(\m{a},T)}$.
  However, atom $\m{Q}(\m{a},T)$ is redundant, since $\m{P}(\m{a},T)$ suffices to make $\subst{X:=\m{a}}$ an answer to $Q$ for any~$T$.
  
  (Observe that also $\cans{Q'}{D'}0{\subst{X:=\m{a}}}{\m{P}(\m{a},T)}$, but produced by a different SLD-derivation.)
  \eoe
\end{example}

We now look at the relationship between the operational definition of computed answer with premises and the notion of hypothetical answer.
The examples above show that these notions do not precisely correspond.
However, we can show that computed answers with premises approximate hypothetical answers and, conversely, every hypothetical answer is a grounded instance of a computed answer with premises.

\begin{theorem}[Soundness]
  \label{thm:SLD-sound}
  Let $Q=\tuple{P,\Pi}$ be a query, $D$ be a data stream and $\tau$ be a time instant.
  Assume that $\dcans$.
  Let $\sigma$ be a ground substitution such that: (i)~$\supp\sigma=\var{\wedge_i\alpha_i}\cup(\var P\setminus\supp\theta)$ and (ii)~if $\alpha_i$ is $P(t_1,\ldots,t_n)$, then $t_n\sigma>\tau$.
  Then there is a set $H\subseteq\{\alpha_i\sigma\}_i$ such that $\tuple{(\theta\sigma)|_{\var P},H}\in\dhans$.
\end{theorem}
\begin{proof}
  Assume that there is some SLD-refutation with future premises of $Q$ over $D_\tau$.
  Then this is an SLD-derivation whose last goal $G=\neg\wedge_i\alpha_i$ only contains potentially future EDB atoms with respect to $\tau$.
  Let $\sigma$ be any substitution in the conditions of the hypothesis.
  Taking $H'=\{\alpha_i\sigma\}_i$, we can extend this SLD-derivation to a (standard) SLD-refutation for $\Pi\cup D_\tau\cup H'\cup\{\neg P\}$, by resolving $G$ with each of the $\alpha_i$ in turn.
  The computed answer is then the restriction of $\theta\sigma$ to $\var P$.
  By soundness of SLD-resolution, $\Pi\cup D_\tau\cup H'\models P(\theta\sigma)|_{\var P}$.
  Since \begin{newtext}$H'$ is finite,\end{newtext} we can find a minimal set $H\subseteq H'$ with the latter property.
\end{proof}

\begin{theorem}[Completeness]
  \label{thm:SLD-compl}
  Let $Q=\tuple{P,\Pi}$ be a query, $D$ be a data stream and $\tau$ be a time instant.
  If $\tuple{\theta,H}\in\dhans$, then there exist substitutions $\rho$ and $\sigma$ and a finite set of atoms $\{\alpha_i\}_i$ such that $\theta=\rho\sigma$, $H=\{\alpha_i\sigma\}_i$ and $\cans QD\tau\rho{\wedge_i\alpha_i}$.
\end{theorem}
\begin{proof}
  Suppose $\tuple{\theta,H}\in\dhans$.
  Then $\Pi\cup D_\tau\cup H\models P\theta$.
  By completeness of SLD-resolution, there exist substitutions $\gamma$ and $\delta$ and an SLD-derivation for $\Pi\cup D_\tau\cup H\cup\{\neg P\}$ with computed answer $\gamma$ such that $\theta=\gamma\delta$.

  By minimality of $H$, for each $\alpha\in H$ there must exist a step in this SLD-derivation where the current goal is resolved with $\alpha$.
  By independence of the computation rule, we can assume that these are the last steps in the derivation.
  Let $\mathcal D'$ be the derivation consisting only of these steps, and $\mathcal D$ be the original derivation without $\mathcal D'$.
  Let $\rho$ be the answer computed by $\mathcal D$ and $G=\neg\wedge_i\alpha_i$ be its last goal, let $\rho'$ be the answer computed by $\mathcal D'$, and define $\sigma=\rho'\delta$.
  Then:
  \begin{itemize}
  \item Let $X\in\supp\theta$; then $X$ occurs in $P$.
    If $\rho(X)$ occurs in $G$, then by construction $\gamma(X)=(\rho\rho')(X)$.
    If $\rho(X)$ is a ground term or $\rho(X)$ does not occur in $G$, then trivially $\gamma(X)=\rho(X)=(\rho\rho')(X)$ since $\rho'$ does not change $\rho(X)$.
    In either case, $\theta=\gamma\delta=\rho\rho'\delta=\rho\sigma$.
  \item $H=\{\alpha_i\sigma\}_i$: by construction of $\mathcal D'$, $H=\{\alpha_i\rho'\}_i$, and since $\alpha_i\rho'$ is ground for each $i$, it is also equal to $\alpha_i\rho'\delta=\alpha_i\sigma$.
  \end{itemize}
  The derivation $\mathcal D$ shows that $\cans QD\tau\rho{\wedge_i\alpha_i}$.
\end{proof}

All notions introduced in this section depend on the time parameter $\tau$, and in particular on the history dataset $D_\tau$.
In the next section, we explore the idea of ``organizing'' the SLD-derivation adequately to pre-process $\Pi$ independently of $D_\tau$, so that the computation of (hypothetical) answers can be split into an offline part and a less expensive online part.

\section{Incremental computation of hypothetical answers}
\label{sec:algorithm}

Proposition~\ref{prop:HA-split-general} states that the set of hypothetical answers evolves as time passes, with hypothetical answers either gaining evidence and becoming query answers or being put aside due to their dependence on facts that turn out not to be true.

In this section, we show how we can use this temporal evolution to compute supported answers incrementally.
\begin{newtext}
We start by showing a simple result on SLD-derivations that allows us to reorganize the resolution steps based on the timestamp (\S~\ref{sec:more-sld}).
Using this result, we propose a two-stage process for computing and updating supported answers to continuous queries over data streams: a first (offline) stage pre-processes the program as much as possible (\S~\ref{sec:preproc}), and a second (online) stage updates a set of supported answers based on incoming information (\S~\ref{sec:online}).
Correctness of the online stage, due to its complexity, is delegated to a separate section (\S~\ref{sec:online-corr}).
\end{newtext}

\subsection{A two-stage process}
\label{sec:more-sld}

We start by revisiting SLD-derivations and showing how they can reflect the structure \begin{newtext}of the datastream\end{newtext}.

\begin{proposition}
  \label{prop:SLD-strat}
  Let $Q=\tuple{P,\Pi}$ be a query and $D$ be a data stream.
  For any time constant $\tau$, if $\dcans$, then there exist an SLD-refutation with future premises of $Q$ over $D_\tau$ computing \tuple{\theta,\wedge_i\alpha_i} and a sequence $k_{-1}\leq k_0\leq\ldots\leq k_\tau$ such that:
  \begin{itemize}
  \item goals $G_1,\ldots,G_{k_{-1}}$ are obtained by resolving with clauses from $\Pi$;
  \item for $0\leq i\leq\tau$, goals $G_{k_{i-1}+1},\ldots,G_{k_i}$ are obtained by resolving with clauses from $D|_i$.
  \end{itemize}
\end{proposition}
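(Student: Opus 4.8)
The plan is to start from any SLD-refutation with future premises witnessing \dcans\ and to \emph{permute its resolution steps} into the required stratified order, relying on the independence of the computation rule (the switching lemma) for SLD-resolution. The key structural observation is that every step of such a derivation resolves the selected atom either with a clause of $\Pi$ -- necessarily an IDB atom, against a rule with non-empty body -- or with a fact of $D_\tau$ -- an EDB atom, against a unit clause. Since a fact is a unit clause, resolving with it \emph{introduces no new atoms} into the goal, whereas resolving with a clause of $\Pi$ may introduce the atoms of its body. I would record this as a dependency relation ``step $s$ must precede step $s'$ whenever $s'$ selects an atom introduced by $s$'': because fact steps introduce nothing, all its edges emanate from program-clause steps.

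First I would move all program-clause steps to the front. As no step depends on a fact step, scheduling every program-clause step (in its original relative order) before every fact step respects all dependencies. This target order is reached from the given derivation by repeated adjacent transpositions: whenever a fact step immediately precedes a program-clause step, the two select distinct atoms (an EDB and an IDB occurrence), so the switching lemma swaps them, preserving the final goal and the computed answer up to variants. Such swaps never exchange two program steps nor two fact steps, so their internal relative orders survive. Setting $k_{-1}$ to the number of program-clause steps gives the first block $G_1,\dots,G_{k_{-1}}$.

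Next I would sort the remaining fact steps by the timestamp of the fact used. Any two fact steps are mutually independent (each introduces no atom and they select distinct occurrences), so the switching lemma lets me reorder them arbitrarily; I would place all facts of timestamp $0$ first, then those of timestamp $1$, and so on up to $\tau$, and set $k_i$ to the index ending the timestamp-$i$ block. As each fact of $D_\tau$ has a unique timestamp $i\in\{0,\dots,\tau\}$ and belongs to $D_i\setminus D_{i-1}$, this yields $k_{-1}\le k_0\le\cdots\le k_\tau$, with $k_i=k_{i-1}$ precisely when no timestamp-$i$ fact is used. Since I only permute existing steps, exactly the same EDB atoms are resolved with exactly the same facts and the same future EDB atoms are left untouched; hence the final goal still contains only future EDB atoms wrt $\tau$, and the permuted derivation is again an SLD-refutation with future premises computing $\tuple{\theta,\wedge_i\alpha_i}$.

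The main obstacle is the careful handling of the switching lemma in this non-standard setting. The derivation does not end in $\Box$ but in a goal of future EDB atoms; these atoms are never selected, so they ride along as inert passengers under each swap, but I must check that no transposition ever makes one of them resolvable or drops one of them -- which holds because the multiset of resolution steps is never changed, only reordered. I also need to note that each swap preserves goals and computed answers only up to variants; this is harmless here, since the statement concerns only $\theta$ restricted to $\var P$ and the conjunction $\wedge_i\alpha_i$, both determined up to renaming, and by Lemma~\ref{lem:mgu-vars} the mgus may be chosen so that no fresh variables are introduced into the goal in the first place.
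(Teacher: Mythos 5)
Your proof is correct and takes essentially the same approach as the paper, whose entire proof reads ``Straightforward corollary of the independence of the computation rule''; your argument is precisely the detailed permutation-by-switching-lemma construction that this one-liner gestures at. The extra care you take with the non-standard termination condition (a goal of future EDB atoms rather than $\Box$) and with preservation of the computed answer up to variants fills in exactly what the paper leaves implicit.
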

\begin{proof}
  Immediate consequence of the independence of the computation rule.
\end{proof}

An SLD-refutation with future premises with the property guaranteed by
Proposition~\ref{prop:SLD-strat} is called a \emph{stratified} SLD-refutation with future premises.
Since data stream $D$ only contains EDB atoms, it also follows that in a stratified SLD-refutation all goals after $G_{k_{-1}}$ are always resolved with EDB atoms.
Furthermore, each goal $G_{k_i}$ contains only potentially future EDB atoms with respect to $i$.
Let $\theta_i$ be the restriction of the composition of all substitutions in the SLD-derivation up to step $k_i$ to $\var P$.
Then $G_{k_i}=\neg\wedge_j\alpha_j$ represents all hypothetical answers to $Q$ over $D_i$ of the form \tuple{(\theta_i\sigma)|_{\var P},\wedge_j\alpha_j} for some ground substitution $\sigma$ (cf.~Theorem~\ref{thm:SLD-sound}).

This yields an online procedure to compute supported answers to continuous queries over data streams.
In a pre-processing step, we calculate all computed answers with premises to $Q$ over $D_{-1}$, and keep the ones with minimal set of formulas.
(Note that Theorem~\ref{thm:SLD-compl} guarantees that all minimal sets are generated by this procedure, although some non-minimal sets may also appear as in Example~\ref{ex:SLD-problems}.)
The online part of the procedure then resolves each of these sets with the facts delivered by the data stream, adding the resulting resolvents to a set of schemata of supported answers (i.e.~where variables may still occur).
By Proposition~\ref{prop:SLD-strat}, if there is at least one resolution step at this stage, then the hypothetical answers represented by these schemata all have evidence, so they are indeed supported.

\subsection{Pre-processing step}
\label{sec:preproc}

\begin{newtext}
We now look at the pre-processing step of our procedure in more detail.
\end{newtext}
In general, this step may not terminate, as the following example illustrates.

\begin{example}
  \label{ex:infinite}
  Consider the following program $\Pi''$, where $\m R$ is an extensional predicate and $\m S$ is an intensional predicate.
  \begin{align*}
    \m S(X,T) &\rightarrow \m S(X,T+1) &
    \m R(X,T) & \rightarrow \m S(X,T)
  \end{align*}
  If $\m{R(a,t_0)}$ is delivered by the datastream, then $\m{S(a,}t)$ is true for every $t\geq\m{t_0}$.
  
  Thus, $\tuple{[X:=\m a],\{\m R(\m a,T-k)\}}\in\hans{\tuple{\m S(X,T),\Pi''}}D0$ for all $k$.
  The pre-processing step needs to output this infinite set, so it cannot terminate.
  \eoe
\end{example}

We establish termination of the pre-processing step for two different classes of queries.
A query $Q=\tuple{P,\Pi}$ is \emph{connected} if each rule in $P$ contains at most one temporal variable, which occurs in the head whenever it occurs in the body; and it is \emph{nonrecursive} if the directed graph induced by its dependencies is acyclic, cf.~\cite{Motik}.
\begin{proposition}
  \label{prop:termination}
  Let $Q=\tuple{P,\Pi}$ be a nonrecursive and connected query.
  Then the set of all computed answers with premises to $Q$ over $D_{-1}$ can be computed in finite time.
\end{proposition}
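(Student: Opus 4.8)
The plan is to recast the computation of all computed answers with premises over $D_{-1}$ as the construction of a single SLD-tree and to prove that this tree is finite. Since $D_{-1}=\emptyset$, an SLD-refutation with future premises of $Q$ over $D_{-1}$ is just an SLD-derivation of $\Pi\cup\{\neg P\}$ whose last goal consists solely of future EDB atoms wrt $-1$; because every EDB predicate is defined only by empty-body rules, an EDB atom can never be selected for a further resolution step against $\Pi$, so such last goals occur only at leaves of the SLD-tree rooted at $\neg P$ that contain no IDB atom, and the computed answers with premises form a subset of those leaves. Fixing a computation rule that always selects an IDB atom when one is present (legitimate by independence of the computation rule, already invoked for Proposition~\ref{prop:SLD-strat}), each internal node has one child per clause of $\Pi$ whose head unifies with the selected atom; as $\Pi$ is finite, the tree is finitely branching. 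It therefore suffices, by König's lemma, to show that every branch is finite.

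For branch termination I would use nonrecursivity through a rank argument. Since the dependency graph of $Q$ is acyclic, assign to each IDB predicate the length of the longest directed path leaving it, its \emph{rank}. If a goal atom whose IDB predicate has rank $r$ is resolved with a clause $\wedge_j\beta_j\to\beta$ of $\Pi$, then each IDB body atom $\beta_j$ has a predicate reachable in one step from that of $\beta$, hence of rank strictly below $r$. Thus every resolution step removes one element of rank $r$ from the multiset of ranks of IDB atoms in the current goal and inserts finitely many elements, all of rank $<r$. By the Dershowitz--Manna ordering this multiset strictly decreases at each step, and since the multiset order over $\NN$ is well founded there is no infinite branch. (EDB atoms are never resolved against $\Pi$, so they merely accumulate as premises and cannot prolong a derivation.) Consequently the SLD-tree is finite, it has finitely many leaves, each reached by a finite derivation, and the full set of computed answers with premises over $D_{-1}$ is obtained by a terminating traversal; discarding the non-minimal premise sets is then a finite post-processing step.

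The role of connectedness is to keep each such premise finitely representable. By Lemma~\ref{lem:mgu-vars} we may take every mgu so that it introduces into the goal no variable absent from the selected atom; combined with the hypothesis that each clause of $\Pi$ has at most one temporal variable, occurring in the head whenever it occurs in the body, this guarantees that resolution never creates a fresh temporal variable in the goal. Hence in every computed answer with premises $\dcans$ all the $\alpha_i$ share the single temporal variable inherited from $P$ and have the form $\m{R}(\ldots,T+k)$ for integer offsets $k$, so each answer is a finite object and answers that differ only by renaming this variable are identified.

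I expect the termination argument to be the crux. The subtlety is that a goal may \emph{grow} along a derivation, since one atom is replaced by a whole rule body, so bounding the derivation length by the size of the goal is not available; the well-founded multiset ordering on predicate ranks is exactly what converts acyclicity of the dependency graph into finiteness of every branch, and hence of the entire search space.
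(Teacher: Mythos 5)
Your proof is correct and follows essentially the same route as the paper: the paper also assigns to each predicate the length of the longest path leaving it in the (acyclic) dependency graph, observes that the sorted sequence of these ranks over the goal's atoms decreases lexicographically at each resolution step (which is just your Dershowitz--Manna multiset argument in different clothing), and concludes finiteness of the finitely-branching SLD-tree by K\"onig's Lemma. Your explicit account of where connectedness enters (no fresh temporal variables, so each premise set is a finite object over the single variable $T$) is a point the paper's proof leaves implicit.
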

\begin{proof}
  Let $T$ be the (only) temporal variable in $P$.
  \begin{newtext}%
  We first show that all SLD-derivations for $\Pi\cup\neg{P}$ have a maximum depth.
  First, associate to each predicate $P$ the maximum length of a path in the dependency graph for $\Pi$ starting from $P$.
  Next, associate to each goal the sorted sequence of these values for each of its atoms.
  Then each resolution step decreases the sequence assigned to the goal with respect to the lexicographic ordering.
  \end{newtext}
  Since this ordering is well-founded, the SLD-derivation must terminate.

  Furthermore, since $\Pi$ is finite, there is a finite number of possible descendants for each node.
  Therefore, the tree containing all possible SLD-derivations for $\Pi\cup\neg P$ is a finite branching tree with finite height, and by König's Lemma it is finite.

  Since each resolution step terminates (possibly with failure) in finite time, this tree can be built in finite time.
\end{proof}

The algorithm implicit in the proof of Proposition~\ref{prop:termination} can be improved by standard techniques (e.g.~by keeping track of generated nodes to avoid duplicates).
However, since it is a pre-processing step that is done offline and only once, we do not discuss such optimizations.

The authors of \cite{Motik} also formally define query delay\footnote{These are not communication delays, but delays in producing answers to queries.} and window size as follows, where $Q$ is a query with temporal variable $T$.
\begin{definition}
  \label{defn:delay}
  A \emph{delay} for $Q$ is a natural number $d$ such that: for every substitution $\theta$ and every $\tau\geq T\theta+d$, $\theta\in\ans QD\tau$ iff $\theta\in\ans QD{T\theta+d}$.

  \label{defn:window}
  \begin{newtext}
  If $Q$ has a delay $d$, then a natural number $w$ is a \emph{window size} for $Q$ if: for every substitution $\theta$ and all $\tau>\tau'-d$, $\theta\in\ans QD\tau$ iff $\theta\in\ans Q{D\setminus D_{\tau'-w}}\tau$.
  \end{newtext}
\end{definition}

To show termination of the pre-processing step assuming existence of a delay and window size, we apply a slightly modified version of SLD-resolution: we do not allow fresh non-temporal variables to be added.
So if a rule includes new variables in its body, we generate a node for each of its possible instances.
We start by proving an auxiliary lemma.

\begin{lemma}
  \label{lem:termination}
  If $Q=\tuple{P,\Pi}$ has delay $d$ and window size $w$, then there exist two natural numbers $d'$ and $w'$ such that: if an atom in a goal in an SLD-derivation for $\Pi\cup\{\neg P\}$ contains a temporal argument $T+k$ with $k>d'$ or $k<{-w'}$, then the subtree with that goal as root does not contain a leaf including an extensional atom with a temporal parameter dependent on $T$.
\end{lemma}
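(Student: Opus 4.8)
The plan is to assign to every atom occurring in a goal of the SLD-tree its \emph{shift} relative to the query variable $T$: an atom with temporal argument $T+k$ has shift $k$. Since $\Pi$ is finite, a single resolution step changes a shift by a bounded amount $\Delta$, namely the maximum over all rules and body atoms of the difference $m_j-m$ between a body shift $m_j$ and the head shift $m$. The goal is to exhibit $d'$ and $w'$ such that any atom whose shift has left $[-w',d']$ admits no $T$-dependent EDB atom among the leaves of its subtree.

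First I would fix the semantic anchor: in a minimal hypothetical answer, every hypothesis whose temporal parameter depends on $T$ has shift in the finite interval $(d-w,d]$. The upper bound is exactly the delay: if $\tuple{\theta,H}$ were minimal with a hypothesis $\alpha\in H$ at time $T\theta+k_B$ and $k_B>d$, then applying the delay to the data stream $D_\tau\cup H$ at the instants $T\theta+k_B$ and $T\theta+d$ shows that $\alpha$ is not needed, contradicting minimality of $H$. The lower bound follows from the window size $w$ at the evaluation instant $T\theta+d$ fixed by the delay, forcing $k_B>d-w$. Through Propositions~\ref{prop:SLD-sound} and~\ref{prop:SLD-compl}, this bound transfers to the $T$-dependent EDB leaves of any SLD-refutation with future premises that computes a minimal answer.

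The remaining, and main, difficulty is to bound the shift of the intermediate IDB atoms, since a single step is non-monotone on shifts (the rule $\m{Shdn}(X,T)\to\m{Malf}(X,T-2)$ jumps the shift by $+2$) and recursion could in principle push a shift far out and then bring it back, so $d'=d$ cannot suffice. Here I would argue by a pumping analysis along a hypothetical path from the out-of-range atom $A$ down to a $T$-dependent EDB leaf $B$. As there are only finitely many IDB predicates, a long enough such path revisits some predicate: a revisit that preserves the shift gives a loop that can be excised, shortening the path; a revisit that changes the shift by $c\neq0$ can be iterated, producing a family of refutations whose $T$-dependent EDB leaf sits at shift $k_B+nc$ for every $n$. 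This family yields minimal hypothetical answers with $T$-dependent hypotheses at unboundedly many shifts, contradicting the finite interval $(d-w,d]$ of the first step. Hence paths reaching a $T$-dependent EDB leaf carry no shift-changing loop and have length at most the number $N$ of IDB predicates, so their shift excursions are bounded by $N\Delta$. Taking $d'$ and $w'$ to be $d$ and $w$ inflated by $N\Delta$ (padding as needed to keep them natural numbers) then gives the required thresholds.

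I expect the pumping step to be the real obstacle. One must verify that the iterated loop stays unifiable and, crucially, that the pumped EDB leaf is genuinely \emph{necessary}, so that the constructed hypothetical answers are minimal and the interval $(d-w,d]$ can legitimately be invoked against them. The substitution bookkeeping under the modified resolution, where fresh non-temporal variables are eagerly instantiated, together with the check that excising a shift-preserving loop leaves the rest of the derivation intact, are the delicate points; the non-temporal parts are routine but need a carefully stated invariant.
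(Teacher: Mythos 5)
Your proposal is correct in outline and rests on the same engine as the paper's proof: finitely many predicates and finitely many non-temporal instantiations force a pigeonhole/pumping situation, and a loop that shifts the temporal argument can be iterated to produce needed extensional leaves at unboundedly many offsets from $T$, which is incompatible with the existence of a finite delay and window size. The organization is genuinely different, though. The paper argues globally by contradiction -- it assumes no $d'$ and $w'$ exist, extracts one predicate occurring with infinitely many temporal instantiations, and then constructs, for each $t$, a data stream refuting $t$ as a delay or window size -- and consequently never exhibits concrete values of $d'$ and $w'$. You instead decompose the argument into a semantic step (necessary $T$-dependent EDB hypotheses of a minimal answer lie in the interval $(d-w,d]$, directly from Definition~\ref{defn:delay}) and a combinatorial step (excise shift-preserving loops, rule out shift-changing ones, bound the excursion of intermediate IDB atoms by $N\Delta$), which buys explicit bounds $d'=d+N\Delta$, $w'=w+N\Delta$; the paper's version is shorter but non-constructive. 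Two remarks: your path-length bound should really be $N$ times the number of non-temporal instantiation classes rather than $N$ itself, which only perturbs the constant; and the delicate point you flag -- that the pumped extensional leaf is genuinely necessary, so that minimality can be invoked (equivalently, that the counterexample data stream admits no alternative derivation from earlier facts) -- is exactly the step the paper also leaves unargued, dismissing it as ``straightforward to construct an instance of the data stream.'' So your proposal is no less rigorous than the original; it simply makes the shared soft spot visible.
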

\begin{proof}
  Assume that the thesis does not hold, i.e.~for any $d'$ and $w'$ there exists an atom in an SLD-derivation for $\Pi\cup\{\neg P\}$ whose subtree contains a leaf including an extensional atom with temporal parameter dependent on $T$.
  By the independence of the computation rule, this property must hold for any SLD-derivation for $\Pi\cup\{\neg P\}$.

  Then there is a predicate symbol that occurs infinitely many times in the tree for $\Pi\cup\{\neg P\}$ with distinct instantiations of the temporal argument.
  Since the number of distinct instantiations for the non-temporal arguments is finite (up to $\alpha$-equivalence), there must be at least one instantiation that occurs infinitely often.
  If applying SLD-resolution to it generates an extensional literal whose temporal argument depends on $T$, then the tree would have infinitely many leaves containing distinct occurrences of its underlying predicate symbol.

  For any $t$, it is then straightforward to construct an instance of the data stream and a substitution $\theta$ such that $\theta\not\in\ans QD{T\theta}$ and $\theta\in\ans QD{T\theta+t}$, so no number smaller than $t$ can be a delay for $Q$.
  Similarly, we can construct an instance of the data stream and a substitution such that $\theta$ is an answer to $Q$ over $D_{\theta+t}$ but not over $D_{\theta+t}\setminus D_\theta$, so $t$ cannot be a window size for $Q$.
\end{proof}

\begin{proposition}
  \label{thm:termination-char}
  If $Q$ has a delay $d$ and a window $w$, then the set of all computed answers with premises to $Q$ over $D_{-1}$ can be computed in finite time.
\end{proposition}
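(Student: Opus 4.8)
The plan is to follow the blueprint of the proof of Proposition~\ref{prop:termination}, but to replace the well-founded measure coming from acyclicity of the dependency graph (which may now fail, since queries with a delay and a window may be recursive, cf.~Example~\ref{ex:infinite}) by a bound on the temporal offsets that can occur in goals, supplied by Lemma~\ref{lem:termination}.

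First I would invoke Lemma~\ref{lem:termination} to fix natural numbers $d'$ and $w'$ with the property stated there. I then work with the modified SLD-resolution described above, in which no fresh non-temporal variables are introduced: whenever a rule body mentions a non-temporal variable not already bound, we branch over its finitely many ground instances, which is legitimate because all constants occur in the finite program $\Pi$ (recall $D_{-1}=\emptyset$, so no facts contribute new constants). Since $\Pi$ has finitely many rules and each yields finitely many such instances at every step, the resulting SLD-tree for $\Pi\cup\{\neg P\}$ is finitely branching.

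The next step bounds the relevant height of this tree. Every computed answer with premises over $D_{-1}$ ends in a goal consisting only of future EDB atoms, and the temporal arguments of those premises that depend on $T$ are of the form $T+k$. By Lemma~\ref{lem:termination}, any goal containing an atom whose temporal offset $k$ satisfies $k>d'$ or $k<-w'$ roots a subtree that contains no EDB leaf with a temporal parameter dependent on $T$; hence such a goal may be pruned without discarding any computed answer with premises whose premises depend on $T$. After pruning, every $T$-dependent temporal argument occurring in a goal lies in the finite range $[-w',d']$, and, the non-temporal arguments being ground, only finitely many distinct atoms, and therefore only finitely many distinct goals, can appear. Detecting repeated goals to break the cycles introduced by recursion (equivalently, bounding the height directly), the pruned tree becomes finite, so by König's Lemma it can be constructed in finite time; its leaves that consist solely of future EDB atoms are exactly the computed answers with premises to $Q$ over $D_{-1}$.

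The main obstacle is establishing the correctness of the pruning step in full generality. Lemma~\ref{lem:termination} controls offsets relative to the query's temporal variable $T$, but a rule body may introduce a fresh temporal variable $T'$ distinct from $T$, and repeated resolution through a recursive predicate could in principle chain $T'$ into arbitrarily large offsets $T'+k$ that the lemma does not directly bound. I would close this gap by extending the argument in the proof of Lemma~\ref{lem:termination}: any such unbounded chain would likewise produce infinitely many EDB leaves differing only in a temporal offset, and from these one constructs, exactly as there, a data stream and a substitution witnessing that no natural number can serve as a window size (or delay) for $Q$, contradicting the hypothesis. Once this is in place, all temporal offsets appearing in relevant goals are bounded and the finiteness argument goes through.
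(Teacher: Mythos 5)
Your overall architecture is the one the paper uses: Lemma~\ref{lem:termination} supplies the bounds $d'$ and $w'$, the modified SLD-resolution eliminates fresh non-temporal variables by branching over ground instances, and finiteness comes from recognising repeated atoms plus K\"onig's Lemma. Your treatment of the ``main obstacle'' (fresh temporal variables $T'$) is also in the spirit of the paper, which disposes of such subgoals by observing that their leaves cannot contain extensional atoms with non-constant temporal argument, on pain of contradicting the existence of a delay.

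There is, however, a genuine problem with the way you make pruning the primary mechanism. First, the pruning rule ``discard any goal whose offset $k$ satisfies $k>d'$ or $k<-w'$'' is not effective: Lemma~\ref{lem:termination} only asserts that $d'$ and $w'$ \emph{exist}, it gives no way to compute them, so an algorithm cannot test this condition. The paper flags exactly this point (``although we do not know the value of $d'$ and $w'$\dots'') and therefore does not prune at all; instead it builds the tree depth-first and, before expanding an atom, checks whether that atom has already appeared modulo its temporal variable, in which case the previously computed leaves are reused. The bounds $d'$ and $w'$ enter only in the correctness argument, to justify that infinite branches (which necessarily repeat a literal) can be marked as failed. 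Second, even if you knew $d'$ and $w'$, pruning those subtrees is unsound for the statement being proved: Lemma~\ref{lem:termination} only says such a subtree has no leaf with a temporal parameter \emph{dependent on $T$}, but it may still have leaves consisting of future EDB atoms with constant timestamps or with other temporal variables, and these are legitimate computed answers with premises over $D_{-1}$ that the proposition requires you to enumerate. Your own phrasing (``without discarding any computed answer with premises whose premises depend on $T$'') concedes the gap. To repair the argument, demote pruning to a non-step: keep those subtrees, handle them uniformly via the memoisation device you already mention in passing, and let the existence of $d'$ and $w'$ serve only to show that branches which loop forever contribute nothing and may be cut.
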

\begin{proof}
  Let $w'$ and $d'$ be the two natural numbers guaranteed to exist by the previous lemma.
  Then, if a goal contains a positive literal with temporal parameter $T+k$ with $k>d'$ or $k<{-w'}$, we know that the subtree generated from that literal cannot yield leaves with temporal parameter depending on $T$.
  Hence, either this tree contains only failed branches, or it contains leaves that do not depend on $T$, or it is infinite.

  Although we do not know the value of $d'$ and $w'$, we can still use this knowledge by performing the construction of the tree in a depth-first fashion.
  Furthermore, we choose a computation rule that delays atoms with temporal variable other than $T$.
  For each node, before expanding it we first check whether the atom we are processing has already appeared before (modulo temporal variables), and if so we skip the construction of the corresponding subtree by directly inserting the corresponding leaves.

  If an atom contains a temporal variable other than $T$, all leaves in its subtree contain no extensional predicates with non-constant temporal argument (otherwise there would not be a delay for $T$).
  Therefore these sub-goals can also be treated uniformly.

  Finally, suppose that a branch is infinite.
  Again using a finiteness argument, any of its branches must necessarily at some point include repeated literals.
  In this case we can immediately mark it as failed.
\end{proof}

\begin{newtext}
In general, given a query $Q=\tuple{P,\Pi}$, we can try to compute a finite set $\mathcal P_Q$ that represents $\hans QD{-1}$ as follows.
As in the proof of the previous proposition, choose a computation rule that delays atoms with temporal variable other than $T$.
Build an SLD-tree with $\neg P$ as initial goal, with the following two extra checks.
\begin{itemize}
\item Before expanding a node, check whether the atom being processed has already appeared before (modulo temporal variables), and if so immediately generate the corresponding leaves.
\item If the chosen atom has already appeared in the derivation, immediately mark the branch as failed.
\end{itemize}

Building the SLD-tree amounts to the worst case of deciding entailment using SLD-resolution, which is NP-complete.

\begin{definition}
  $\mathcal P_Q$ is the set of all entries \tuple{\theta,\{\alpha_i\}_i} such that the SLD-tree for $\neg P$ contains a leaf with computed answer \tuple{\theta,\wedge_i\alpha_i} with premises to $Q$ over $D_{-1}$ such that $\{\alpha_i\}_i$ is minimal in the whole tree.
\end{definition}
\end{newtext}

Each tuple $\tuple{\theta,H}\in\mathcal P_Q$ represents the set of all hypothetical answers \tuple{\theta\sigma,H\sigma} as in Theorem~\ref{thm:SLD-sound}.

\paragraph{Remark.}
The hypotheses in Propositions~\ref{prop:termination} and~\ref{thm:termination-char} are not necessary to guarantee termination of the algorithm presented for the pre-processing step.
Indeed, consider the following example.

\begin{example}
  \label{ex:unbound}
  In the context of our running example, we say that a turbine has a manufacturing defect if it exhibits two specific failures during its lifetime: at some time it overheats, and at some (different) time it does not send a temperature reading.

  Since this is a manufacturing defect, it holds at timepoint $0$, regardless of when the failures actually occur.\footnote{This is actually an example of a rigid predicate, which we model as described.}
  We can model this property by the rule \[\m{Temp}(X,\m{high},T_1),\m{Temp}(X,\m{n/a},T_2) \to \m{Defective}(X,0)\,.\]
  Let $\Pi'_E$ be the program obtained from $\Pi_E$ by adding this rule, and consider now the query $Q'=\tuple{\m{Defective}(X,T),\Pi'_E}$.
  Performing SLD-resolution between $\Pi'_E$ and $\m{Defective}(X,0)$ yields the new goal $\neg\left(\m{Temp}(X,\m{high},T_1)\wedge\m{Temp}(X,\m{n/a},T_2)\right)$, which only contains potentially future atoms with respect to $-1$.
  \eoe
\end{example}
The program in this example includes a rule that uses two different time variables.
As a consequence, the query is not connected, and it does not have a delay or window size (since no single predicate can use both $T_1$ and $T_2$).
Still, the pre-processing step terminates.

We assume from this point onwards that the query $Q$ is connected.

\subsection{The online step}
\label{sec:online}

\begin{newtext}
We now describe how to compute and update the set $\dsans$ in a schematic way, i.e., using variables to represent families of similar hypotheses.

\begin{definition}
  A \emph{schematic supported answer} for a query $Q$ at time $\tau$ is a triple \tuple{\theta,E,H} where $\theta$ is a substitution, $E$ is a set of ground EDB atoms and $H$ is a set of (not necessarily ground) EDB atoms such that $\tuple{\theta,E,H\sigma}\in\dsans$ for every ground substitution $\sigma$.
\end{definition}
\end{newtext}

\begin{definition}
  A substitution $\sigma$ is a \emph{minimal substitution with property $P$} if: for any substitution $\theta$ with property $P$, there exists $\rho$ such that $\theta=\sigma\rho$.
\end{definition}
In particular, an mgu for a set is a minimal substitution unifying all formulas in the set.

\begin{definition}
  \begin{newtext}
  Let $Q$ be a query.
  For each time point $\tau$, we define the set $\sch\tau$ as follows.
  \end{newtext}
  \begin{itemize}
  \item $\sch{-1}=\emptyset$
  \item If $\tuple{\theta,H}\in\mathcal P_Q$ and $\sigma$ is a minimal substitution such that $H\sigma\cap D|_\tau\neq\emptyset$ and $H\sigma\setminus D|_\tau$ only contains potentially future atoms wrt $\tau$, then $\tuple{\theta\sigma,H\sigma\cap D|_\tau,H\sigma\setminus D|_\tau}\in S_\tau$.
  \item If $\tuple{\theta,E,H}\in\sch{\tau-1}$ and $\sigma$ is a minimal substitution such that $H\sigma\setminus D|_\tau$ only contains potentially future atoms wrt $\tau$, then $\tuple{\theta\sigma,E\cup E',H\sigma\setminus D|_\tau}\in\sch\tau$, where $E'=H\sigma\cap D|_\tau$.
  \end{itemize}
\end{definition}

\begin{newtext}
In \S~\ref{sec:online-corr}, we prove that $\sch\tau$ is a set of schematic supported answers that captures the set $\dsans$ precisely.
Before that, we show that $\sch\tau$ can be computed as follows.

\begin{definition}
   Let $S$ be a set of schematic supported answers for a query $Q$ at time $\tau-1$ and $\mathcal P_Q$ be the result of pre-processing $Q$.
   The \emph{update} of $S$ using $\mathcal P_Q$, $\mathcal U(S,\mathcal P_Q)$, is computed as follows.
  \begin{itemize}
  \item For each $\tuple{\theta,H}\in\mathcal P_Q$, let $M\subseteq H$ be the set of atoms with minimal timestamp.
    For each computed answer $\sigma$ to $\neg\bigwedge M$ and $D|_\tau$, add $\tuple{\theta\sigma,M\sigma,(H\setminus M)\sigma}$ to $\mathcal U(S,\mathcal P_Q)$.
  \item For each $\tuple{\theta,E,H}\in S$, let $M\subseteq H$ be the set of atoms with timestamp $\tau$.
    For each computed answer $\sigma$ to $\neg\bigwedge M$ and $D|_\tau$, add $\tuple{\theta\sigma,(E\cup M)\sigma,(H\setminus M)\sigma}$ to $\mathcal U(S,\mathcal P_Q)$.
  \end{itemize}
\end{definition}
\end{newtext}%

\begin{lemma}
  \label{lem:new-lemma}
  \begin{newtext}For all $\tau\geq 0$, $\sch\tau=\mathcal U(\sch{\tau-1},\mathcal P_Q)$.\end{newtext}
\end{lemma}
\begin{proof}
  Both cases
  \begin{newtext}%
  of the computation of $\mathcal U$%
  \end{newtext}
  are similar.
  We show the first one in detail.
  Assume that $\tuple{\theta,H}\in\mathcal P_Q$ and $\sigma$ is a minimal substitution such that $H\sigma\cap D|_\tau\neq\emptyset$ and $H\sigma\setminus D|_\tau$ only contains potentially future atoms wrt $\tau$.
  By definition of potentially future atom and the fact that all the elements of $D|_\tau$ have the same timestamp ($\tau$), the set $H\sigma\cap D|_\tau$ contains the elements of $H\sigma$ with minimal timestamp.
  Therefore $\sigma$ is a minimal substitution unifying all the elements of $M$ with elements of $D|_\tau$, and the thesis follows by soundness and completeness of SLD-resolution.
  Note that $M$ cannot be empty, since $H\neq\emptyset$.

  For the second case, we observe that all elements of $H$ have their timestamp instantiated (by connectedness, there can be at most one temporal variable, which is instantiated when new elements are added from $\mathcal P_Q$).
  The proof then proceeds as in the previous case.
\end{proof}

\begin{example}
  \label{ex:blahblah}
  We illustrate this mechanism with our running example, where $$\mathcal P_Q=\{\tuple{\emptyset,\underbrace{\{\m{Temp}(X,\m{high},T),\m{Temp}(X,\m{high},T+1),\m{Temp}(X,\m{high},T+2)\}}_H}\}\,.$$
  The atom with minimal timestamp in $H$ is $\m{Temp}(X,\m{high},T)$.
  We start by setting $\sch{-1}=\emptyset$.
  
  Since $D|_0=\{\m{Temp}(\m{wt25},\m{high},0)\}$, SLD-resolution between $\m{Temp}(X,\m{high},T)$
  and $D|_0$ yields $\subst{X:=\m{wt25},T:=0}$.
  Therefore,
  \[
  \sch0=
  \tuple{\subst{X:=\m{wt25},T:=0},\{\m{Temp}(\m{wt25},\m{high},0)\},
    \{\underbrace{\m{Temp}(\m{wt25},\m{high},1),\m{Temp}(\m{wt25},\m{high},2)}_{H_0}\}}\}\,.
  \]

  Next, $D|_1=\{\m{Temp}(\m{wt25},\m{high},1)\}$.
  As before, SLD-resolution between $\m{Temp}(X,\m{high},T)$ and $D|_1$ yields $\subst{X:=\m{wt25},T:=1}$.
  Furthermore, SLD-resolution between $\m{Temp}(\m{wt25},\m{high},1)$ and $D|_1$ yields $\emptyset$.

  Therefore,
  \begin{align*}
    \sch1=\{
    & \tuple{\subst{X:=\m{wt25},T:=1},\{\m{Temp}(\m{wt25},\m{high},1)\},
      \{\underbrace{\m{Temp}(\m{wt25},\m{high},2),\m{Temp}(\m{wt25},\m{high},3)}_{H'_1}\}}, \\
    & \tuple{\subst{X:=\m{wt25},T:=0},\{\m{Temp}(\m{wt25},\m{high},0),\m{Temp}(\m{wt25},\m{high},1)\},
      \{\underbrace{\m{Temp}(\m{wt25},\m{high},2)}_{H_1}\}}\}\,.
  \end{align*}

  Next we have $D|_2=\{\m{Temp}(\m{wt25},\m{high},2)\}$.
  Again, SLD-resolution between $\m{Temp}(X,\m{high},T)$ and $D|_2$ yields the substitution $\subst{X:=\m{wt25},T:=2}$, while SLD-resolution between $\m{Temp}(\m{wt25},\m{high},2)$ (the atom with minimal timestamp in both $H_1$ and $H'_1$) and $D_3$ yields $\emptyset$.
  Thus
  \begin{align*}
    \sch2=\{
    & \tuple{\subst{X:=\m{wt25},T:=2},\{\m{Temp}(\m{wt25},\m{high},2)\},
      \{\underbrace{\m{Temp}(\m{wt25},\m{high},3),\m{Temp}(\m{wt25},\m{high},4)}_{H''_2}\}}, \\
    & \tuple{\subst{X:=\m{wt25},T:=1},\{\m{Temp}(\m{wt25},\m{high},1),\m{Temp}(\m{wt25},\m{high},2)\},
      \{\underbrace{\m{Temp}(\m{wt25},\m{high},3)}_{H'_2}\}}, \\
    & \tuple{\subst{X:=\m{wt25},T:=0},\{\m{Temp}(\m{wt25},\m{high},0),\m{Temp}(\m{wt25},\m{high},1),\m{Temp}(\m{wt25},\m{high},2)\},
      \emptyset}\}\,.
  \end{align*}

  The last element of $\sch2$ has an empty set of hypotheses, so the answer $\subst{X:=\m{wt25},T:=0}$ is known at this point.

  Suppose now that $D|_3=\emptyset$.
  Then SLD-resolution between $\m{Temp}(X,\m{high},T)$ and $D|_3$ fails, so no new elements are added from $\mathcal P_Q$ to $\sch3$.
  Furthermore, $\m{Temp}(\m{wt25},\m{high},3)$ (which appears in both $H'_2$ and $H''_2$) does not unify with $D|_3$, and we know that it cannot be delivered by the data stream.
  Therefore
  \[ \sch3=\tuple{\subst{X:=\m{wt25},T:=0},\{\m{Temp}(\m{wt25},\m{high},0),\m{Temp}(\m{wt25},\m{high},1),\m{Temp}(\m{wt25},\m{high},2)\},\emptyset}\}\,.\]
  \eoe
\end{example}

This example illustrates that, in particular, answers are always propagated from $\sch\tau$ to $\sch{\tau+1}$.
In an actual implementation of this algorithm, we would likely expect these answers to be output when generated, and discarded afterwards.

\subsection{Correctness}
\label{sec:online-corr}

\begin{newtext}
We now show that the algorithm given in the previous section indeed maintains a set of schematic supported answers to $Q$ that captures all supported answers at any given time point.
\end{newtext}

\begin{theorem}[Soundness]
  \label{thm:sound}
  If $\tuple{\theta,E,H}\in\sch\tau$, $E\neq\emptyset$, and $\sigma$ instantiates all free variables in $E\cup H$, then $\tuple{\theta\sigma,H',E\sigma}\in\dsans$ for some $H'\subseteq H\sigma$.
\end{theorem}
\begin{proof}
  First, we show by induction on $\tau$ that there is an SLD-derivation with future premises of $Q$ and $D_\tau$ with computed answer with premises \tuple{\theta,H}.
  For $\sch{-1}=\emptyset$ this is trivially the case.

  Suppose now that $\tau\geq 0$ and $\tuple{\theta,E,H}\in\sch\tau$.
  By Lemma~\ref{lem:new-lemma}, \tuple{\theta,E,H} can be computed by SLD-resolution from $M\subseteq H'$ for either some $\tuple{\theta',H'}\in\mathcal P_Q$ or some $\tuple{\theta',E',H'}\in\sch{\tau-1}$.
  In both cases, we can compose this SLD-derivation with the one obtained either by the way $\mathcal P_Q$ is constructed or the one obtained by induction hypothesis to obtain an SLD-derivation with future premises of $Q$ and $D_\tau$ -- this simply requires adding the remaining elements of $H'$ to all nodes in the derivation from Lemma~\ref{lem:new-lemma}.
  
  By applying Theorem~\ref{thm:SLD-sound} to this SLD-derivation, we conclude that $\tuple{\theta\sigma,H'}\in\dhans$ for some $H'\subseteq H\sigma$.
  By construction, $E\sigma\neq\emptyset$ is evidence for this answer.
\end{proof}

It may be the case that $\sch\tau$ contains some elements that do not correspond to hypothetical answers because of the minimality requirement.
Consider a simple case of a query $Q$ where $\mathcal P_Q=\{\tuple{\emptyset,\{p(a,0),p(b,1),p(c,2)\}},\tuple{\emptyset,\{p(a,0),p(c,2),p(d,3)\}}\}$,
$D|_0=\{p(a,0)\}$ and $D|_1=\{p(b,1)\}$.
Then $\sch1=\{\tuple{\emptyset,\{p(a,0),p(b,1)\},\{p(c,2)\}},\tuple{\emptyset,\{p(a,0)\},\{p(c,2),p(d,3)\}}\}$, and the second element of this set has a non-minimal set of hypotheses.
We chose not to include a test for set inclusion in the definition of $\sch\tau$, though, for efficiency reasons.

\begin{theorem}[Completeness]
  \label{thm:compl}
  If $\tuple{\sigma,H,E}\in\dsans$, then there exist a substitution $\rho$ and a triple $\tuple{\theta,E',H'}\in\sch\tau$ such that $\sigma=\theta\rho$, $H=H'\rho$ and $E=E'\rho$.
\end{theorem}
\begin{proof}
  By Theorem~\ref{thm:SLD-compl}, \dcans\ for some substitution $\rho$ and set of atoms $H'=\{\alpha_i\}_i$ with $H=\{\alpha_i\rho\}_i$ and $\sigma=\theta\rho$ for some $\theta$.
  By Proposition~\ref{prop:SLD-strat}, there is a stratified SLD-derivation computing this answer.
  The sets of atoms from $D|_\tau$ unified in each stratum of this derivation define the set of elements from $H$ that need to be unified to construct the corresponding element of $\sch\tau$, and it is straightforward to check that they are defined as in Lemma~\ref{lem:new-lemma}.
\end{proof}

It also follows from our construction that: if $d$ is a query delay for $Q$, then the timestamp of each element of $H$ is at most $\tau+d$.
Likewise, if $w$ is a window size for $Q$, then all elements in $E$ must have timestamp at least $\tau-w$.

The following example also shows how, by outputting hypothetical answers, we can answer queries earlier than in other formalisms.

\begin{example}
  \label{ex:good}
  Suppose that we extend the program $\Piex$ in our running example with the following rule (as in Example~2 from~\cite{Motik}).
  \[
  \m{Temp}(X,\m{n/a},T) \to \m{Malf}(X,T)
  \]
  If $D_1=\{\m{Temp}(\m{wt25},\m{high},0),\m{Temp}(\m{wt25},\m{high},1),\m{Temp}(\m{wt42},\m{n/a},1)\}$, then
  \begin{align*}
    &\langle\subst{T:=0,X:=\m{wt25}},
      \{\m{Temp}(\m{wt25},\m{high},i)\mid i=0,1\},
      \{\m{Temp}(\m{wt25},\m{high},2)\}\rangle,\\
    &\langle\subst{T:=1,X:=\m{wt42}},\{\m{Temp}(\m{wt42},\m{n/a},1)\},\emptyset\rangle
  \end{align*}
  are both in $\sch1$.
  Thus, the answer \subst{T:=1,X:=\m{wt42}} is produced at timepoint~$1$, rather than being delayed until it is known whether \subst{T:=0,X:=\m{wt25}} is an answer.\eoe
\end{example}

\begin{theorem}[Complexity]
  \label{thm:algorithm}
  The set $\sch\tau$ can be computed from $\mathcal P_Q$ and $\sch{\tau-1}$ in time polynomial in the size of $\mathcal P_Q$, $\sch{\tau-1}$ and $D|_\tau$.
\end{theorem}
\begin{proof}
  If $\tuple{\theta,H}\in\mathcal P_Q$, then we can compute the set $M$ of elements of $H$ with minimal timestamp in linear time.
  To decide which substitutions make $M$ a subset of $D|_\tau$, we can perform classical SLD-resolution between $M$ and $D|_\tau$.
  For each such element of $\mathcal P_Q$, the size of every SLD-derivation that needs to be constructed is bound by the number of atoms in the initial goal, since $D|_\tau$ only contains facts.
  Furthermore, all unifiers can be constructed in time linear in the size of the formulas involved, since the only function symbol available is addition of temporal terms.
  Finally, the total number of SLD-derivations that needs to be considered is bound by the size of $\mathcal P_Q\times D|_\tau$.

  If $\tuple{\theta,E,H}\in\sch{\tau-1}$ and $E\neq\emptyset$, then again we can compute in linear time the set of facts in $H$ that must unify with $D|_\tau$ -- these are the elements of $H$ whose timestamp is exactly $\tau$.
  As above, the  elements that must be added to $\sch\tau$ can then be computed in polynomial time by SLD-resolution.
\end{proof}

It would be interesting to generalize our algorithm to non-connected queries, in order to deal with examples such as Example~\ref{ex:unbound}.
There, if $\m{Temp}(\m{wt25},\m{high},0)\in D_0$, it would be nice to obtain $$\tuple{\subst{X:=\m{wt25},T_1:=0},\{\m{Temp}(\m{wt25},\m{high},0)\},\{\m{Temp}(\m{wt25},\m{n/a},T_2)\}}\in\sch0$$ as this can be relevant information, even if we do not know when (or whether) it will lead to an answer to the original query.
However, such a generalization requires substantial changes to our approach, and we leave it for future work.

\section{Adding negation}
\label{sec:neg}

We now show how we can extend our framework to include negation.
We extend the syntax of our programs to allow negated atoms in the bodies of rules, but not in heads of rules or in queries.
The semantics of negation is based on the closed world assumption.

Hypothetical answers need to deal with the fact that this semantics introduces non-monotonicity: if $\neg p$ holds at a given point in time, $p$ may still become true later due to new incoming data.
Therefore $\neg p$ can only be taken as evidence for a hypothetical answer when the current $\tau$-history allows us to prove that $p$ does not hold in the whole dataset.
We deal with this issue by giving supported answers a semantics reminiscent of the Kripke semantics
for intuitionistic logic.

\begin{newtext}
\subsection{Temporal Datalog with negation}
To the best of our knowledge, the existing works on Temporal Datalog only consider the negation-free fragment of the language.
We extend the syntax of Temporal Datalog in the standard way~\cite{Abiteboul1995}, by allowing bodies of rules to contain negated atoms.
We further require any variables appearing in negated literals to occur elsewhere in a positive literal in the same rule (safe negation).

We illustrate this extended language with the example that we use throughout this section.

\begin{example}
  \label{ex:hospital}
  A hospital is conducting a study on early detection of critical patients.
  A patient that either shows bad results on their lab tests or does not have good vital signs (cardiac markers and blood oxygen levels) is placed under observation.
  A patient that is under observation and does not have good vital signs is moved to the ICU.
  This patient is considered a candidate for early risk detection two days before being moved to the ICU.

  These rules are implemented in the following program $\Pi_H$, which uses EDB predicates \m{BCA} (bad clinical analyses), \m{GCM} (good cardiac markers) and \m{GBOL} (good blood oxygen levels), as well as IDB predicates \m{GVS} (good vital signs), \m{ST} (patient status) and \m{Risk} (patient at risk).
  All these predicates include the patient's name as the first argument and the timestamp as the last; predicate \m{ST} also includes the patient status (\m{us} for ``under surveillance'' and \m{ic} for ``intensive care'') as second argument.
  Since the laboratory is overworked, they only communicate bad results of analyses as a combined outcome (predicate \m{BCA}) after two days.
  
  \begin{align*}
    \m{GCM}(X,T),\m{GBOL}(X,T) &\to \m{GVS}(X,T) \\
    \m{BCA}(X,T+2) &\to \m{ST}(X,\m{us},T+1) \\
    \neg\m{GVS}(X,T),\neg\m{ST}(X,\m{us},T) &\to \m{ST}(X,\m{us},T+1) \\
    \neg\m{GVS}(X,T),\m{ST}(X,\m{us},T) &\to \m{ST}(X,\m{ic},T+1) \\
    \m{ST}(X,\m{ic},T+2) &\to \m{Risk}(X,T)
  \end{align*}
  Note that $\m{BCA}(X,T+2)$ actually conveys information about the patient's status at timestamp $T$.
  (In particular, $\m{BCA}(X,0)$ and $\m{BCA}(X,1)$ are never produced by the datastream for any $X$.)
  \eoe
\end{example}

If $\Pi$ is a program and $D=\{D|_\tau\mid\tau\in\NN\}$ is a dataset, we write $\Pi\cup D\models\ell$ to denote that $\ell$ is entailed by $\Pi$ and $\bigcup_{\tau\in\NN}D|_\tau$ according to the cautious answer set semantics for logic programming~\cite{Gelfond1988}.
As usual, a precise correspondence between this semantics and an operational semantics based on SLD-resolution only holds in the presence of a stratification of the predicate symbols in the program with respect to negative dependencies~\cite{Lloyd1984}.
We now discuss what a notion of stratification should look like for programs in Temporal Datalog with negation.
\end{newtext}

\begin{definition}
  The \emph{temporal closure} of a program $\Pi$ in Temporal Datalog with negation is the program $\Pid$ defined as follows.
  For each $(n+1)$-ary predicate symbol $p$ with a temporal argument in the signature underlying $\Pi$, the signature for $\Pi^\downarrow$ contains a family of $n$-ary predicate symbols $\{p_t\}_{t\in\NN}$.
  For each rule in $\Pi$, $\Pid$ contains all rules obtained by instantiating its temporal parameter in all possible ways and replacing $p(x_1,\ldots,x_n,t)$ by $p_t(x_1,\ldots,x_n)$.
\end{definition}
Observe that $\Pid$ is an infinite program
\begin{newtext}%
(in Datalog with negation)    
\end{newtext}%
as long as $\Pi$ has at least one predicate with a temporal argument.

Recall that, for infinite programs, a stratification of $\Pi$ is an infinite sequence of disjoint programs $\Pi_0,\ldots,\Pi_n,\ldots$ such that $\Pi=\cup_{k\in\NN}\Pi_k$ and, for each predicate symbol $p$, (i)~the definition of $p$ is contained in one $\Pi_k$, (ii)~if $p$ occurs in the body of
a rule with head $q$, then the definition of $q$ is contained in $\Pi_i$ for some $i$ such that $i\geq k$ (if the occurrence is not negated) or $i>k$ (if the occurrence is negated)~\cite{Kolaitis1991}.

\begin{definition}
  A program $\Pi$ is $T$-stratified if $\Pid$ is stratified.
\end{definition}

This definition of stratification is reminiscent of the concepts of locally stratified~\cite{Przymusinski1988a,Przymusinski1988b} and temporally stratified~\cite{Zaniolo2015}.
However, it differs from both.
Local stratification has mostly been studied in the context of programs with a finite number of constants, which is not our case (there are infinitely many time points).
On the other hand, temporal stratification as in~\cite{Zaniolo2015} requires each stratum to contain exactly the predicates with the same time parameter (i.e., $\Pi_i$ contains all rules whose head uses predicate symbols indexed by $i$).
We make no such assumption in this work.

\begin{example}
  \label{ex:running-neg}
  Program $\Pi_H$ is $T$-stratified with the following strata:
  \begin{align*}
    \Pi_0=&\{\m{GCM}_t,\m{GBOL}_t,\m{GVS}_t,\m{BCA}_t\mid t\in\NN\}\cup\{\m{ST}_0\} \\
    \Pi_1=&\{\m{ST}_1\} \\
    \Pi_{t+2}=&\{\m{ST}_{t+2},\m{Risk}_t\} & t\geq 0
    \qquad\eoe
  \end{align*}
\end{example}

\begin{newtext}
Given an interpretation $I$, satisfaction of rules and programs is defined as before.
(Note that the restriction that negation be safe removes any potential ambiguity.)
It is easy to check that $I$ is a model of $\Pi$ iff it is a model of $\Pid$, and as a consequence $T$-stratified programs have a unique answer set that coincides with their well-founded model.
\end{newtext}

For finite programs, stratification can be phrased (and decided) as a property of the graph of dependencies between the predicate symbols in the program.
In our case, such a graph is infinite, and such a procedure may not terminate.
However, the usual construction can be adapted to our framework in order to provide a decision procedure.

\begin{newtext}
\begin{definition}
  Let $\Pi$ be a program.
  The \emph{temporal dependency graph} of $\Pi$, $\gr\Pi$, is the smallest graph satisfying the following properties:
  \begin{enumerate}
  \item for each predicate symbol $p$ with a temporal argument, $\gr\Pi$ includes a node with label $p(T)$ (where $T$ is a formal temporal variable);
  \item for each predicate symbol $p$ without temporal arguments, $\gr\Pi$ includes a node with label $p$;
  \item for each node with label $p(T+k)$ with $k\leq 0$ and each rule $r$ in the definition of $p$, let $\theta$ be the substitution that makes the temporal variable in the head of $r$ equal to $T+k$:
    \begin{enumerate}
    \item for each positive literal $q(x_1,\ldots,x_n,T+k')$ in $\body(r\theta)$, $\gr\Pi$ includes an edge between $p(T+k)$ and $q(T+k')$ with label $+$.
    \item for each negative literal $\neg q(x_1,\ldots,x_n,T+k')$ in $\body(r\theta)$, $\gr\Pi$ includes an edge between $p(T+k)$ and $q(T+k')$ with label $-$.
    \end{enumerate}
  \end{enumerate}
\end{definition}
The graph $\gr\Pi$ can be constructed by starting with the nodes described in the first two conditions, adding the edges (and new nodes) required by conditions 3(a) and 3(b), and iterating the process for any newly created nodes.
\end{newtext}

\begin{example}
  The temporal dependency graph for the program $\Pi_H$ is depicted below.
  \[
  \xymatrix@C-1em{
    &\m{ST}(T+2) \ar@/^/[d]^{-} \ar@/_/[d]_{+} \ar[drr]^{-}
    & \m{BCA}(T+2) \\
    & \m{ST}(T+1) \ar@/^/[d]^{-} \ar@/_/[d]_+ \ar[drr]^{-} \ar[ur]_+
    & \m{BCA}(T+1) & \m{GVS}(T+1) \ar@/^/[r]^+ \ar@/_1pc/[rr]_+
    & \m{GBOL}(T+1) & \m{GCM}(T+1) \\
    \m{Risk}(T) \ar@/^1pc/[uur]^+
    & \m{ST}(T) \ar@/^/[d]^{-} \ar@/_/[d]_+ \ar[drr]^{-} \ar[ur]_+
    & \m{BCA}(T) & \m{GVS}(T) \ar@/^/[r]^+ \ar@/_1pc/[rr]_+
    & \m{GBOL}(T) & \m{GCM}(T) \\
    & \m{ST}(T-1) \ar[ur]_+
    && \m{GVS}(T-1)
  }
  \]
  \eoe
\end{example}

\begin{proposition}
  \label{prop:stratification-dec}
  There is an algorithm that decides whether a program $\Pi$ is $T$-stratified.
\end{proposition}
\begin{proof}
  We first show that we can decide whether $\gr\Pi$ contains a path that passes infinitely many times through distinct edges labeled with $-$.
  Indeed, we can build a finite subgraph $\gr\Pi^-$ of $\gr\Pi$ by not expanding any nodes labeled $p(T+k)$ if $\gr\Pi^-$ already contains a path from $p(T+k')$ to $p(T+k)$ for some $k'<k$.
  Since $\Pi$ only contains a finite number of rules with a finite number of literals in their bodies, this condition generates an upper bound on the timestamps of the nodes that appear in the graph; since the number of predicate symbols is finite, $\gr\Pi^-$ must also be finite.
  Furthermore, $\gr\Pi$ contains a path in the conditions described if either (i)~$\gr\Pi^-$ also contains such a path (i.e.~it contains a loop including an edge labeled with $-$) or (ii)~$\gr\Pi^-$ contains a path from $p(T+k')$ to $p(T+k)$ with $k'<k$ that includes an edge labeled with $-$.
  
  Now we show that, if $\gr\Pi$ does not contain a path that passes infinitely many times through edges labeled with $-$, then $\Pi$ is $T$-stratified.

  Let $\gr\Pid$ be the graph of dependencies for $\Pid$.
  For each predicate symbol $p_t$ in $\Pid$, the graph $\gr\Pid$ contains a subgraph isomorphic to a subgraph\footnote{It may be a proper subgraph in case $t$ is low enough that some temporal arguments would become negative.} of the connected component of $\gr\Pi$ containing $p(T)$.
  In particular, $\gr\Pid$ contains infinitely many copies of $\gr\Pi$.

  Suppose that $\gr\Pid$ contains a path that passes infinitely many times through edges labeled with $-$, and let $p_t$ be a node in it with minimum value of $t$ (if $t$ occurs multiple times in the path, choose any one of its occurrences).
  Consider the isomorphic copy of $\gr\Pi$ embedded in $\gr\Pid$ where $p(T)$ is mapped to $p_t$.
  The whole path starting from $p_t$ must be contained in this copy, since the timestamps of all corresponding nodes in $\gr\Pi$ are greater or equal to $T$.

  Using these two properties, we can construct a stratification of $\Pid$ in the usual inductive way.
  \begin{itemize}
  \item $\mathcal G_0=\gr\Pid$
  \item For each $i\geq 0$, $\Pi_i$ contains all predicates labeling nodes $n$ in $\mathcal G_i$ such that: all paths from $n$ contain only edges labeled with $+$.
  \item For each $i\geq 0$, $\mathcal G_{i+1}$ is obtained from $\mathcal G_i$ by removing all nodes with labels in $\Pi_i$ and all edges to those nodes.
  \end{itemize}

  The only non-trivial part of showing that $\Pi_0,\ldots,\Pi_n,\ldots$ is a stratification of $\Pid$ is guaranteeing that every predicate symbol $p_t$ is contained in $\Pi_k$ for some $k$.
  This is established by induction: if any path from $p_t$ contains at most $n$ edges labeled with $-$, then $p_t\in\Pi_n$.
  We now show that, for any $p_t$, there is such a bound on the number of edges labeled with $-$.

  First observe that, in $\gr\Pi$, there is a bound $B$ on the number of edges labeled with $-$ in any path starting from $p(T)$: (i)~if there is a path from $q(T+k_1)$ to $q(T+k_2)$ for some $q$ and $k_1\leq k_2$ containing an edge labeled with $-$, then there would be a path with an infinite number of such edges, and (ii)~since there is a finite number of predicate symbols in $\Pi$ we cannot build paths with an arbitrarily long number of edges labeled with $-$.

  Now consider a path starting at $p_t$.
  If this path has length greater than $B$, then it cannot be contained completely in the subgraph of $\gr\Pid$ isomorphic to $\gr\Pi$ containing $p(T)$.
  Therefore, it must at some point reach a node $q_{t'}$ with $t'<t$.
  Likewise, the path starting at $q_{t'}$ corresponds to a path in a (different) subgraph of $\gr\Pid$ isomorphic to $\gr\Pi$, and as such cannot contain more than $B$ edges labeled with $-$.
  Since we can only decrease $t$ a finite number of times, the path cannot contain arbitrarily many edges labeled with $-$.
\end{proof}

Note that this does not contradict the classical undecidability of deciding whether a program has local stratification~\cite{Palopoli1992}, since our language does not have function symbols.

\begin{example}
  \begin{newtext}
  When the graph $\gr\Pi$ is finite, the proof of Proposition~\ref{prop:stratification-dec} can be used to construct any stratum $\Pi_n$ in finite time by constructing only the necessary part of $\gr\Pid$.
  In particular, applying it to $\gr{\Pi_H}$ shown earlier yields the stratification already presented in Example~\ref{ex:running-neg}.
  \end{newtext}
  
  For illustrative purposes, we show a fragment of the graphs $\mathcal G_0=\gr\Pid$, $\mathcal G_1$ and $\mathcal G_2$, marking in color the nodes that are removed (i.e., collected in $\Pi_i$).
  \[
  \mathcal G_0:
  \xymatrix{
    \m{Risk}_1 \ar[r]^+
    & \m{ST}_3 \ar@/^/[d]^{-} \ar@/_/[d]_+ \ar[drr]^{-}
    & \toremove{\m{BCA}_3}
    & \toremove{\m{GVS}_3} \ar@/^/[r]^+ \ar@/_1pc/[rr]_+
    & \toremove{\m{GBOL}_3}
    & \toremove{\m{GCM}_3} \\
    \m{Risk}_0 \ar[r]^+
    & \m{ST}_2 \ar@/^/[d]^{-} \ar@/_/[d]_+ \ar[drr]^{-} \ar[ur]_+
    & \toremove{\m{BCA}_2}
    & \toremove{\m{GVS}_2} \ar@/^/[r]^+ \ar@/_1pc/[rr]_+
    & \toremove{\m{GBOL}_2}
    & \toremove{\m{GCM}_2} \\
    & \m{ST}_1 \ar@/^/[d]^{-} \ar@/_/[d]_+ \ar[drr]^{-} \ar[ur]_+
    & \toremove{\m{BCA}_1}
    & \toremove{\m{GVS}_1} \ar@/^/[r]^+ \ar@/_1pc/[rr]_+
    & \toremove{\m{GBOL}_1}
    & \toremove{\m{GCM}_1} \\
    & \toremove{\m{ST}_0} \ar[ur]_+
    & \toremove{\m{BCA}_0}
    & \toremove{\m{GVS}_0} \ar@/^/[r]^+ \ar@/_1pc/[rr]_+
    & \toremove{\m{GBOL}_0}
    & \toremove{\m{GCM}_0}
  }
  \]

  \[
  \mathcal G_1:
  \xymatrix{
    \m{Risk}_1 \ar[r]^+
    & \m{ST}_3 \ar@/^/[d]^{-} \ar@/_/[d]_+ \\
    \m{Risk}_0 \ar[r]^+
    & \m{ST}_2 \ar@/^/[d]^{-} \ar@/_/[d]_+ \\
    & \toremove{\m{ST}_1}
  }
  \qquad
  \mathcal G_2:
  \xymatrix{
    \m{Risk}_1 \ar[r]^+
    & \m{ST}_3 \ar@/^/[d]^{-} \ar@/_/[d]_+ \\
    \toremove{\m{Risk}_0} \ar[r]^+
    & \toremove{\m{ST}_2} \\
  }
  \]
  \eoe
\end{example}

\subsection{Hypothetical answers in the presence of negation}

\begin{newtext}
The inclusion of negation in the language poses a new problem with respect to defining the notion of evidence for a hypothetical answer: we need to be able to argue that some atom will never become true given the set of hypotheses.
To formalize this notion, 
\end{newtext}
it is not enough to consider the data stream extended by $H$ -- we need to account also for additional information that may make more atoms provable.

\begin{definition}
  Let $D=\{D|_\tau\mid \tau\in\NN\}$ be a dataset, $\tau'$ be a time point and $H^+$ and $H^-$ be finite sets of ground EDB atoms.
  A dataset $D'=\{D'|_t\mid t\in\NN\}$ is a \emph{possible evolution} of $D$ compatible with \tuple{H^+,H^-} at time $\tau$ if:
  \begin{itemize}
  \item $D'|_\tau = D|_\tau$ for all $\tau\leq\tau'$;
  \item $H^+\subseteq\bigcup_{\tau\in\NN}D'|_t$;
  \item $H^-\cap\left(\bigcup_{\tau\in\NN}D'|_t\right)=\emptyset$.
  \end{itemize}
\end{definition}
Intuitively, $H^+$ contains the facts that \emph{must} be produced by the data stream, and $H^-$ the facts that \emph{cannot} be produced by the data stream.

Hypothetical answers to a query $Q=\tuple{\Pi,P}$ over $D_\tau$, where $\Pi$ may contain negated atoms, need to consider all possible evolutions of the dataset consistent with the hypotheses.

Given a set of literals $L$, we write $L^+$ for the set of positive literals in $L$ and $L^-$ for the set of atoms whose negation is in $L$.
In particular, we use this notation in the next two definitions.

\begin{definition}
  \label{defn:neg-hypothetical}
  A \emph{hypothetical answer} to query $Q=\tuple{P,\Pi}$ over $D_\tau$ is a pair $\tuple{\theta,H}$, where $\theta$ is a substitution and $H$ is a finite set of ground EDB literals such that:
  \begin{itemize}
  \item $\supp\theta=\var P$;
  \item $H^+$ and $H^-$ only contain atoms future-possible for $\tau$;
  \item $\Pi\cup\left(\bigcup_{\tau'\in\NN}D'|_{\tau'}\right)\models P\theta$ for each possible evolution $D'$ of $D$ compatible with \tuple{H^+,H^-} at time $\tau$;
  \item $H$ is minimal with respect to set inclusion.
  \end{itemize}
\end{definition}

\begin{definition}
  \label{defn:neg-supported}
  Let $Q=\tuple{P,\Pi}$ be a query, $D$ be a data stream and $\tau$ be a time instant.
  A set of ground EDB literals $E$ is \emph{evidence} supporting $\tuple{\theta,H}\in\dhans$ if:
  \begin{itemize}
  \item $E^+\subseteq D_\tau$ and $E^-\cap D_\tau=\emptyset$;
  \item $E^+\cup E^-\neq\emptyset$;
  \item $\Pi\bigcup D'\models P\theta$ for each possible evolution $D'$ of $D$ compatible with \tuple{E^+\cup H^+,E^-\cup H^-} at time~$-1$;
  \item $E$ is minimal with respect to set inclusion.
  \end{itemize}
  \begin{newtext}
  Note that the third condition effectively ignores $D$, since it ignores its contents at all time points.
  This matches our intuition that the evidence is exactly the  information from the data stream that is relevant to prove the query.
  \end{newtext}

  A \emph{supported answer} to $Q$ over $D_\tau$ is a triple \tuple{\theta,H,E}%
  \begin{newtext}%
  such that $\Pi\cup H\not\models P\theta$
  \end{newtext}%
  and $E$ is evidence supporting \tuple{\theta,H}.
\end{definition}
Intuitively, $E^+$ is the set of facts produced by the data stream that are essential to the hypothetical answer, while $E^-$ is the set of facts whose absence is relevant for the hypothetical answer.

We illustrate these alternative notions \begin{newtext}in the context of our\end{newtext} running example.
\begin{example}
  \label{ex:hospital-query}
  \begin{newtext}Consider the program $\Pi_H$ from Example~\ref{ex:hospital} and\end{newtext} the query $Q_R=\tuple{\m{Risk}(X,T),\Pi_H}$, and let $\theta=\subst{X:=\m{john},T:=0}$.
  \begin{itemize}
  \item Suppose that $D|_0=\emptyset$ then \tuple{\theta,H} where $H=\{\neg\m{GCM}(\m{john},1)\}$ is one (of several) hypothetical answers to $Q_R$ over $D_0$, with evidence $\{\neg\m{GBOL}(\m{john},0)\}$.

    Indeed $\m{GBOL}(\m{john},0)\notin D|_0$, so it can never hold; therefore $\neg\m{GVS}(\m{john},0)$ holds, and we can conclude $\m{ST}(\m{john},\m{us},1)$.
    If $\m{GCM}(\m{john},1)$ is never produced by the data stream, then $\neg\m{GVS}(\m{john},1)$ also holds, from which we can derive $\m{ST}(\m{john},\m{ic},2)$ and as a consequence $\m{Risk}(\m{john},0)$.

  \item Assume now that $D|_0=\{\m{GBOL}(\m{john},0),\m{GCM}(\m{john},0)\}$.
    Now there is a hypothetical answer \tuple{\theta,H'} to $Q_R$ over $D_1$, where $H'=\{\m{BCA}(\m{john},2),\neg\m{GCM}(\m{john},1)\}$.
    Indeed, $\m{ST}(\m{john},\m{us},1)$ holds as long as $\m{BCA}(\m{john},2)$ is produced by the data stream; and if $\m{GCM}(\m{john},1)$ is not produced by the data stream, then as before $\neg\m{GVS}(\m{john},1)$ holds, from which we can derive $\m{ST}(\m{john},\m{ic},2)$ and $\m{Risk}(\m{john},0)$.
    At this stage, there is no evidence for \tuple{\theta,H'}.
    \eoe
  \end{itemize}
\end{example}

\subsection{Pre-processing}

The notion of computed answer with premises in the presence of negation is generalized by allowing leaves of SLD-resolution to contain negated (EDB or IDB) atoms besides positive EDB atoms.

Our two-stage algorithm can now be adapted to the language with negation using this extended notion.
Let $Q=\tuple{P,\Pi}$ be a query.
In the pre-processing step, we compute the set $\mathcal P_Q$ using SLD-resolution as before.
For each element $\tuple{\theta,\{\alpha_i\}_{i\in I}}\in\mathcal P_Q$ and $i\in I$, if $\alpha_i$ is $\neg p(t_1,\ldots,t_n,t_{n+1})$, we generate a new query $Q'=\tuple{p(t_1,\ldots,t_n,T),\Pi}$, obtained by replacing the time parameter in $\alpha_i$ with a variable, and repeat the pre-processing step to compute $\mathcal P_{Q'}$.
We iterate this construction until no fresh queries are generated.

Given that the number of constants is finite and the number of arguments in any predicate in $\Pi$ is finite, the total number of queries that can be generated is also finite
\begin{newtext}-- albeit quite high, as in the worst case we can generate $O(m\times c^{k-1})$ auxiliary queries, where $m$ is the number of predicate symbols in the language, $c$ is the number of constants, and $k$ is the highest arity of any predicate in $\Pi$.\end{newtext}
(The only possible source of unboundedness is the temporal parameter, which we keep uninstantiated.)
Therefore, Propositions~\ref{prop:termination} and~\ref{thm:termination-char} still hold for this more expressive language,
\begin{newtext}but there is an exponential increase in worst-case complexity of the pre-processing step.\end{newtext}

\begin{example}
  \label{ex:preproc-neg}
  We continue with our running example (Example~\ref{ex:hospital}).
  Applying SLD-resolution to $\leftarrow\m{Risk}(X,T)$ until we reach a goal with only extensional or negated atoms results in two possible derivations:
  \[\xymatrix@R=1em@C=-6em{
    & \leftarrow\m{Risk}(X,T) \ar[d] \\
    & \leftarrow\m{ST}(X,\m{ic},T+2) \ar[d] \\
    & \leftarrow\neg\m{GVS}(X,T+1),\m{ST}(X,\m{us},T+1) \ar[]+D;[dl] \ar[]+D;[dr] \\
    \leftarrow\neg\m{GVS}(X,T+1),\m{BCA}(X,T+2)
    && \leftarrow\neg\m{GVS}(X,T+1),\neg\m{GVS}(X,T),\neg\m{ST}(X,\m{us},T)
  }\]
  This yields the following set $\mathcal P_{Q_R}$.
  \begin{align*}
    \mathcal P_{Q_R} &= \{
      \tuple{\emptyset,\{\m{BCA}(X,T+2),\neg\m{GVS}(X,T+1)\}},
      \\ & \phantom{{}=\{}
      \tuple{\emptyset,\{\neg\m{GVS}(X,T+1),\neg\m{GVS}(X,T),\neg\m{ST}(X,\m{us},T)\}}.
  \end{align*} 

  Since \m{GVS} and \m{ST} appear as negated hypotheses, we generate two auxiliary queries $Q_G=\tuple{\m{GVS}(X,T),\Pi_H}$ and $Q_S=\tuple{\m{ST}(X,\m{us},T),\Pi_H}$.
  These queries are now pre-processed in turn.
  \begin{align*}
    \mathcal P_{Q_G} &= \tuple{\emptyset,\{\m{GCM}(X,T),\m{GBOL}(X,T)\}} \\
    \mathcal P_{Q_S} &= \{\tuple{\emptyset,\{\m{BCA}(X,T+1)\}},
      \\ & \phantom{{}=\{}
      \tuple{\emptyset,\{\neg\m{GVS}(X,T-1),\neg\m{ST}(X,\m{us},T-1)\}}
  \end{align*}
  Since the newly generated negated goals yield queries that have already been considered, the pre-processing step is finished.
  \eoe
\end{example}

\paragraph{Constants in generated queries}
The negated atoms in leaves may be partly instantiated.
This creates a need to decide whether those instantiations should be maintained, or whether all terms should be replaced by variables in the new query.
The precise choice depends on the concrete instance of the problem: replacing all terms by variables restricts the total number of queries being managed, since all negated instances involving the same predicate reduce to the same query.
However, having instantiated terms may substantially reduce the number of branches in the SLD-tree constructed, and the number of potential answers that have to be tracked.
In particular, if the instance originates from a term that is instantiated in the original query, it might be counter-productive to replace it by a variable.

\begin{example}
  Consider the program $\Pi$ below and the query \tuple{\m{Q}(T),\Pi}.
  \begin{align*}
    \neg\m{P}(\m{c_1},T) &\to \m{Q}(T) &
    \neg\m{P}(\m{c_2},T) &\to \m{Q}(T) \\
    \neg\m{P}(\m{c_3},T) &\to \m{Q}(T) &
    \m{R}(X,T) &\to \m{P}(X,T)
  \end{align*}

  In this case, replacing the constants \m{c_1}, \m{c_2} and \m{c_3} with a variable in the generated queries results in the same query $\m{P}(X,T)$ in all cases.
  Keeping the constants in the generated queries results in the need to process $\m{P}(\m{c_1},T)$, $\m{P}(\m{c_2},T)$ and $\m{P}(\m{c_3},T)$, which generates three isomorphic trees and nearly identical computed answers with premises.\eoe
\end{example}

\begin{example}
  Consider the following program $\Pi$
  \begin{align*}
    \neg\m{P}(X,T) &\to \m{Q}(X,T)
    &&\to \m{P}(\m{c},T) \\
    && \m{R}(X,T) &\to \m{P}(X,T)
  \end{align*}
  and the query \tuple{\m{Q}(\m{c},T),\Pi}.
  This query generates as premise $\neg\m{P}(\m{c},T)$, which can never hold because of the second rule.
  However, if $\m{c}$ is abstracted to a variable, the generated query becomes $\m{P}(X,T)$, which yields two computed answer with premises $\tuple{\subst{X:=\m{c}},\emptyset,\emptyset}$ and $\tuple{\emptyset,\m{R}(X,T),\emptyset}$.
  \eoe
\end{example}

One possible approach would be to keep all instantiated non-temporal terms unchanged, but replace them by variables if a new query on the same predicate is generated.
We defer the concrete choice of strategy to future work, as it is immaterial for this work.
In particular, the previous comment on Propositions~\ref{prop:termination} and~\ref{thm:termination-char} applies in either case.

\begin{theorem}[Soundness and completeness of pre-processing]
  \label{thm:preproc-sound-compl}
  Let $Q^\ast=\tuple{\Pi,P}$ be one of the queries obtained from the pre-processing of $Q$, $F$ be a set of literals with timestamp higher than $\tau$ such that $F^+$ only contains EDB atoms.
  Assume that $\Pi\cup D_\tau\cup F$ is consistent, and let $\theta$ be a substitution that instantiates all variables in $P$.
  Then $\Pi\cup D_\tau\cup F\models P\theta$ iff there exist substitutions $\theta'$ and $\sigma$ and a set $H$ such that $\tuple{\theta',H}\in\mathcal P_{Q^\ast}$, $\theta=\theta'\sigma$ and $H\sigma\subseteq F$.
\end{theorem}
\begin{proof}
  From soundness and completeness of SLD-resolution, it immediately follows that $\Pi\cup D_\tau\cup F\models P\theta$ iff $\tuple{\theta',H}\in\mathcal P_Q$ for some $\theta'$ and $H$ such that $\theta=\theta'\sigma$ and $H\sigma\subseteq F$ for some substitution $\sigma$.
  (Note that the pre-processing step essentially treats negated literals as EDB atoms.)
\end{proof}

\subsection{The iterative step}
\label{sec:neg-iter}

In the online step of the algorithm, we now need to compute schematic answers not only to the original query, but also to each query generated by the pre-processing step.
In this section, we use $Q^\ast$ to range over all these queries.
This step is now significantly more complex than for the language without negation: besides updating the (positive) hypotheses in hypothetical answers with information from the data stream, we need to apply a fixpoint construction in order to update the negative hypotheses -- these are updated using the computed schematic answers themselves, and each update can trigger new possible updates.

In order to simplify the presentation, we first formulate the iterative step working only with ground terms.
After discussing its soundness and completeness, we show how it can be reformulated in terms of schematic hypothetical answers using variables, as in the previous case.
Without loss of generality, we also assume that all constants are abstracted to variables in the queries generated by the pre-processing step, so that for each literal $\ell$ there exists at most one query on the predicate symbol in $\ell$.
(This assumption removes the need for some quantifications, but does not change anything in an essential manner.)

We work with \emph{generalized hypothetical answers}, where hypotheses can be EDB literals or negated IDB atoms, and we recursively define sets $\mathcal S^\downarrow_\tau(Q^\ast)$ of these entities.
The definition uses an auxiliary family $\mathcal A_\tau(Q^\ast)$; intuitively, $\mathcal A_\tau(Q^\ast)$ is computed from $\mathcal S^\downarrow_{\tau-1}(Q^\ast)$ by updating positive hypotheses according to the data stream (as in the case without negation), and $\mathcal S^\downarrow_\tau(Q^\ast)$ is computed from $\mathcal A_\tau(Q^\ast)$ by updating the
negative hypotheses.

\begin{definition}
  \label{defn:Sdownarrow}
  The families of sets $\mathcal S^\downarrow_\tau(Q^\ast)$ and $\mathcal A_\tau(Q^\ast)$ are defined recursively as follows, where $Q^\ast=\tuple{P,\Pi}$ is one of the queries under consideration.
  \begin{enumerate}
  \item $\mathcal S^\downarrow_{-1}(Q^\ast)=\emptyset$.
  \item For each $\tuple{\theta,E,H}\in\mathcal S^\downarrow_{\tau-1}{Q^\ast}$, let $M$ be the (possibly empty) set of elements of $H^+$ with timestamp $\tau$.
    If $M\subseteq D|_\tau$, $\tuple{\theta,E\cup M,H\setminus M}\in\mathcal A_\tau(Q^\ast)$.
  \item For each $\tuple{\theta,H}\in\mathcal P_{Q^\ast}$, let $M$ be the set of elements of $H$ with minimal timestamp.
    For each substitution $\sigma$ such that $H\sigma$ is ground:
    \begin{enumerate}
    \item If $M^+$ is non-empty and $M^+\sigma\subseteq D|_\tau$, then $\tuple{\theta\sigma,M^+\sigma,(H\setminus M^+)\sigma}\in\mathcal A_\tau(Q^\ast)$.
    \item If $P\sigma$ has timestamp $\tau$ and all elements of $M\sigma$ are negative literals, then $\tuple{\theta\sigma,\emptyset,H\sigma}\in\mathcal A_\tau(Q^\ast)$.
    \item If $P\sigma$ has timestamp $\tau$ and every element of $H\sigma$ has timestamp strictly larger than $\tau$, then $\tuple{\theta\sigma,\emptyset,H\sigma}\in\mathcal A_\tau(Q^\ast)$.
    \end{enumerate}
  \item $\mathcal S^\downarrow_\tau(Q^\ast)$ is obtained by applying the following operator $\mathcal R$ to $\mathcal A_\tau(Q^\ast)$ until a fixpoint is reached: given a family $X(Q^\ast)$ of generalized hypothetical answers, non-deterministically select a query $Q_\ell$ over a literal $\ell$.
    \begin{enumerate}
    \item If there is a tuple $\tuple{\theta,E_\ell,\emptyset}\in X(Q_\ell)$, then for all $Q^\ast$ remove every \tuple{\sigma,E,H} from $X(Q^\ast)$ whenever $\ell\theta\in H^-$.
    \item For all substitutions $\sigma$ such that the timestamp of $\ell\sigma$ is at most $\tau$ and there is no tuple $\tuple{\sigma,E_\ell,H_\ell}\in X(Q_\ell)$, if $X(Q^\ast)$ contains an element \tuple{\theta,E,H} such that $\ell\sigma\in H^-$, then remove $\neg\ell\sigma$ from $H$ and add it to $E$.
    \end{enumerate}
    We assume that $\mathcal R$ selects a query $Q_\ell$ such that $X$ changes, if possible.
  \end{enumerate}
\end{definition}

The operator $\mathcal R$ randomly selects a query $Q_\ell$ and updates the generalized hypothetical answers that use negative ground instances of literal $\ell$ by (i)~discarding the ones whose hypotheses contradict generated answers to $Q_\ell$ and (ii)~moving hypotheses to evidence when they match no generalized hypothetical answer to $Q_\ell$.
In order to guarantee that all these steps are finite, the second step in the computation of $\mathcal R(X)$ only looks at hypotheses whose timestamp is at most the current value of $\tau$ -- steps~3(b) and~3(c) ensure that $\mathcal R$ does not incorrectly remove some negative hypotheses.

\begin{lemma}
  Iterating $\mathcal R$ always reaches a fixpoint in finite time.
\end{lemma}
\begin{proof}
  If $\mathcal R(X)$ is different from $X$, then either the total number of elements in all the sets $X(Q^\ast)$ decreases or the total number of elements in all the sets $H$ in tuples \tuple{\theta,E,H} in all $X(Q^\ast)$ decreases.
  But set $X$ is finite: every element of $\mathcal A_\tau(Q^\ast)$ either comes from unification between $\mathcal P_{Q^\ast}$ and the current $\tau$-history (and since both $\mathcal P_{Q^\ast}$ and $\tau$-histories are finite, only a finite number of elements can be generated in this way), or from directly instantiating elements of $\mathcal P_{Q^\ast}$ with a timestamp at most $\tau$ (which can only be done in finitely many ways).
  Therefore this process must terminate.
\end{proof}

\begin{lemma}
  \label{lem:R-unique}
  For each query $Q^\ast$, the set $\mathcal S^\downarrow_\tau(Q^\ast)$ is uniquely defined.
\end{lemma}
\begin{proof}
  We start by showing that: if we mark the generalized hypothetical answers that can be removed from a set $X$ and the set of (negative) hypotheses that can be turned into evidence, then, after applying $\mathcal R$ once, all actions that were not made can still be applied afterwards.
  Together with the fact that the actions enabled by new tuples that arrive are the same, this implies that all actions must have been performed when reaching a fixpoint.
  
  Let $X$ and $Y$ be families of generalized hypothetical answers to the queries $Q^\ast$ such that $Y$ can be obtained from $X$ by application of $R$.
  \begin{itemize}
  \item If \tuple{\sigma,E,H} can be removed from $X(Q^\ast)$ by application of $\mathcal R$ and there is a corresponding tuple \tuple{\sigma,E',H'} in $Y$ (either the same or obtained by moving something from $H$ to $E$ in the second case defining $\mathcal R$), then \tuple{\sigma,E',H'} can be removed from $Y(Q^\ast)$ by application of $\mathcal R$.

    Indeed, if $\tuple{\theta,E_\ell,\emptyset}\in X(Q_\ell)$, then this answer must still be in $Y(Q_\ell)$, since $\mathcal R$ does not remove answers (with empty set of hypotheses).
    Furthermore, if $\ell\theta\in H^-$ then $\ell\theta\in {H'}^-$, since \tuple{\theta,E_\ell,\emptyset} prevents $\ell\theta$ from being removed from $H$ by application of $\mathcal R$.

  \item If $\tuple{\sigma,E,H}\in X(Q^\ast)$ can be transformed into \tuple{\sigma,E\cup\{\ell\theta\},H\setminus\{\ell\theta\}} by application of $\mathcal R$ and there is a corresponding tuple \tuple{\sigma,E',H'} in $Y$ with $\ell\theta\in H'$, then \tuple{\sigma,E',H'} can be changed to \tuple{\sigma,E'\cup\{\ell\theta\},H'\setminus\{\ell\theta\}} by application of $\mathcal R$.

    Indeed, if the timestamp of $\ell\theta$ is at most $\tau$ and there is no tuple
    $\tuple{\theta,E,H}\in X(Q_\ell)$, then this must still be the case in $Y(Q_\ell)$, since
    $\mathcal R$ does not add new tuples to $X$.
  \end{itemize}

  Now consider two sequences of applications of $\mathcal R$ starting from $X$ and ending at a fixpoint.
  By inductively applying the previous argument, any tuples that can be removed or changed in $X$ must have been removed or changed in the same way when reaching the fixpoint.
  Furthermore, new tuples that can be removed or changed due to tuples that were removed or changed will be the same in both sequences of applications, since these tuples are directly determined by the action(s) performed.
  Therefore both sequences must end at the same fixpoint.
\end{proof}

We illustrate this construction by means of a small example, before moving to our richer running example.

\begin{example}
  \label{ex:negation-past}
  \begin{newtext}Consider the query $Q=\tuple{\m{R}(T),\Pi'}$ where $\Pi'$ is the following program.\end{newtext}
  \begin{align*}
    \m{S}(T), \m{S}(T+1) &\to \m{P}(T) &
    \neg\m{P}(T) &\to \m{R}(T)
  \end{align*}

  The pre-processing step for this \begin{newtext}query\end{newtext} generates $\mathcal P_Q=\{\tuple{\emptyset,\{\neg\m{P}(T)\}}\}$.
  Since $\m{P}(T)$ occurs negated in the premise of the only answer in this set, we generate a new query $Q'=\tuple{\m{P}(T),\Pi}$, for which the pre-processing step yields $\mathcal P_{Q'}=\{\tuple{\emptyset,\{\m{S}(T),\m{S}(T+1)\}}\}$.
  
  Assume that $D_0=\emptyset$.
  Since the only element of $P_Q$ contains a negated atom in its premises, we are in case (3c) and $\mathcal A_0(Q)=\tuple{\subst{T:=0},\emptyset,\{\neg\m{P}(0)\}}$.
  Furthermore, $\mathcal A_0(Q')=\emptyset$, since $\m{S}(0)\not\in D_0$.
  Since there is no element in $\mathcal A_0(Q')$, we move $\neg\m{P}(0)$ to the set of evidence in the only element of $\mathcal A_0(Q)$; this yields a fixpoint of $\mathcal R$, so $\sch0(Q)=\tuple{\subst{T:=0},\{\neg\m{P}(0)\},\emptyset}$ and $\sch0(Q')=\emptyset$.

  Now suppose that $D_1=\{\m{S}(1)\}$.
  Reasoning as before, we obtain $\mathcal A_1(Q)=\{\tuple{\subst{T:=1},\emptyset,\{\neg\m{P}(1)\}}\}$.
  However, since $\m{S}(1)\in D_1$, we also obtain $\mathcal A_1(Q')=\{\tuple{\subst{T:=1},\{\m{S}(1)\},\{\m{S}(2)\}}\}$.
  Now there is an answer in $\mathcal A_1(Q')$ with substitution $\subst{T:=1}$ and non-empty set of premises, so we cannot conclude anything about $\m{P}(1)$ yet, and thus $\mathcal A_1$ is already a fixpoint of $\mathcal R$; so $\sch1=\mathcal A_1$.

  If $D_2=\{\m{S}(2)\}$, then $\tuple{\subst{T:=1},\{\m{S}(1),\m{S}(2)\},\emptyset}\in\mathcal A_2(Q')$.
  As a consequence, $\neg\m{P}(1)$ does not hold, and $\tuple{\subst{T:=1},\emptyset,\{\neg\m{P}(1)\}}\in\mathcal A_2(Q)$ is removed by $\mathcal R$.

  On the other hand, if $D_2=\emptyset$, it is the answer $\tuple{\subst{T:=1},\{\m{S}(1)\},\{\m{S}(2)\}}\in\sch1(Q')$ that does not propagate to $\mathcal A_2(Q')$, from which applying $\mathcal R$ will allow us to conclude that $\subst{T:=1}$ is output as an answer to $Q$.
  \eoe
\end{example}

Soundness and completeness hold for $T$-stratified programs.
Proving soundness requires a partial form of completeness, which is why the next lemma is stated as an if-and-only-if.

\begin{theorem}[Soundness]
  \label{thm:negated-sound}
  Let $Q^\ast=\tuple{\Pi,P}$, where $\Pi$ is $T$-stratified, be one of the queries obtained from the pre-processing of $Q$, $F$ be a set of literals with timestamp higher than $\tau$ such that $F^+$ only contains EDB atoms.
  Assume that $\Pi\cup D_\tau\cup F$ is consistent, and let $\theta$ be a substitution that instantiates all variables in $P$.

  Then:
  \begin{enumerate}[(i)]
  \item If $\tuple{\theta,E,H}\in\mathcal S^\downarrow_\tau(Q^\ast)$ for some $E$ and $H$ such that $E^+\subseteq D_\tau$ and $H\subseteq F$, then $\Pi\cup D_\tau\cup F\models P\theta$.
  \item If the timestamp of $P\theta$ is at most $\tau$, then the converse implication also holds.
  \end{enumerate}
\end{theorem}
\begin{proof}
  The proof is by induction on the construction of $\mathcal S^\downarrow_\tau(Q^\ast)$, considering all queries $Q^\ast$ simultaneously.
  For clarity, we prove (i) and (ii) separately arguing by induction on the timestamp, but their proofs use each other's induction hypothesis.
  We point out that induction hypotheses are always applied to sets that have been constructed earlier, i.e.~we first conclude (i) and (ii) for $\mathcal S^\downarrow_{-1}$, then for $\mathcal A_0$, then for $\mathcal R(\mathcal A_0)$, then for $\mathcal R(\mathcal R(\mathcal A_0))$, etc, until we reach $\mathcal S^\downarrow_0$.

  We first prove property~(i).
  For $\tau=-1$, this is trivial.
  Assume that the result holds for $\sch{\tau-1}$.
  We go through the steps of the construction of $\sch\tau(Q^\ast)$.
  \begin{enumerate}[(Step 1.)]
    \addtocounter{enumi}1
  \item Suppose that $\tuple{\theta,E,H}\in\mathcal S^\downarrow_{\tau-1}(Q^\ast)$, that $M\subseteq D|_\tau$ for the set $M$ of elements with timestamp $\tau$ of $H$, and that $E^+\cup M\subseteq D_\tau$ and $H\setminus M\subseteq F$.
    By construction $E^+\subseteq D_{\tau-1}$, and $H\subseteq F\cup M\subseteq F\cup D|_\tau$.
    If $\Pi\cup D_\tau\cup F$ is consistent, then the induction hypothesis yields the thesis, since $\Pi\cup D_\tau\cup F=\Pi\cup D_{\tau-1}\cup(F\cup D|_\tau)$.
  \item Suppose that $\tuple{\theta,H}\in\mathcal P_{Q^\ast}$, let $M$ be the set of elements in $H$ with minimal timestamp, and suppose that $\sigma$ makes all elements of $H\sigma$ ground.
    The cases~(b) and~(c) follow immediately from Theorem~\ref{thm:preproc-sound-compl}.

    For case~(a), suppose that $M^+\sigma\subseteq D|_\tau$ and that $(H\setminus M^+)\sigma\subseteq F$.
    In particular, all elements of $M$ have timestamp $\tau$.
    Then $H\sigma\subseteq F\cup M^+\sigma\subseteq F\cup D|_\tau$, so by Theorem~\ref{thm:preproc-sound-compl}, $\Pi\cup D_{\tau-1}\cup(F\cup D|_\tau)\models P\theta\sigma$.
    The thesis follows by observing that $\Pi\cup D_{\tau-1}\cup(F\cup D|_\tau)=\Pi\cup D_\tau \cup F$.
  \item The proof follows by induction on the number of times that $\mathcal R$ is applied.
    Again, step~(a) is trivial, since it generates no new generalized hypothetical answers.

    For step~(b), we need to look at the generalized hypothetical answers that are changed by $\mathcal R$.
    Assume that $\tuple{\theta,E,H}$ is such that $\ell\sigma\in H^-$ for some $\ell$ and $\sigma$ with the timestamp of $\ell\sigma$ at most $\tau$, and that there is no tuple $\tuple{\sigma,E_\ell,H_\ell}\in X(Q_\ell)$.
    Suppose that $(E\cup\{\neg\ell\sigma\})^+\subseteq D_\tau$ and $H\setminus{\neg\ell\sigma}\subseteq F$.
    Then $E^+\subseteq D_\tau$ and $H\subseteq F\cup\{\neg\ell\sigma\}$.

    But $\Pi\cup D_\tau\cup(F\cup\{\neg\ell\sigma\})$ is consistent iff $\Pi\cup D_\tau\cup F\not\models\ell\sigma$, which by induction hypothesis (on $Q_\ell$) is the case since there are no tuples $\tuple{\sigma,E_\ell,H_\ell}\in X(Q_\ell)$.
    Therefore the induction hypothesis (on $Q^\ast$) allows us to conclude that $\Pi\cup D_\tau\cup(F\cup\{\neg\ell\sigma\})\models P\theta$.
    But the induction hypothesis (on $Q_\ell$) also yields that $\Pi\cup D_\tau\cup F\models\neg\ell\sigma$: since $\Pi$ is stratified, if this were not the case then $\Pi\cup D_\tau\cup F\models\ell\sigma$, whence there would be some tuple $\tuple{\sigma,E_\ell,H_\ell}\in X(Q_\ell)$.
  \end{enumerate}

  We now move to property~(ii).
  From Theorem~\ref{thm:preproc-sound-compl}, there exists $\tuple{\theta',H_0}\in\mathcal P_{Q^\ast}$ such that $\theta=\theta'\sigma$ for some substitution $\sigma$ and $H_0\sigma\subseteq F$.
  We prove that~(ii) holds whenever the minimal timestamp of $H_0^+\sigma\cup\{P\theta\}$ is at most $\tau$, which implies the thesis.
  For $\tau=-1$ this vacuously holds, since timestamps must be positive.

  Assume that $\Pi\cup D_\tau\cup F\models P\theta$.
  If the minimal timestamp of $H_0^+\sigma\cup\{P\theta\}$ is strictly smaller than $\tau$, then by induction hypothesis applied to $F\cup D|_\tau$, there exists $\tuple{\theta,E,H}\in\mathcal S^\downarrow_{\tau-1}(Q^\ast)$ such that $E^+\subseteq D_{\tau-1}$ and $H\subseteq F\cup D|_\tau$.
  Step~(2) of the algorithm then adds $\tuple{\theta,E\cup M,H\subseteq M}$ to $\mathcal A_\tau(Q^\ast)$, where $M\subseteq D|_\tau$, and by construction $(E\cup M)^+\subseteq D_\tau$ and $H\subseteq F$.
  If the minimal timestamp of $H_0^+\sigma\cup\{P\theta\}$ is exactly $\tau$, then a tuple satisfying the thesis is added to $\mathcal A_\tau(Q^\ast)$ in step~(3).

  We now show that the fixpoint construction in step~(4) does not break this property.
  \begin{itemize}
  \item Suppose step~4(a) applies to a set $X$.
    If there is a tuple $\tuple{\theta,E_\ell,\emptyset}\in X(Q_\ell)$, then by induction hypothesis $\Pi\cup D_\tau\cup F'\models\ell\theta$ for any set $F'$ in the conditions of the theorem, and therefore $\Pi\cup D_\tau\cup F$ is inconsistent whenever $\neg\ell\theta\in F$.
    As such, any tuple whose set $H$ contains $\neg\ell\theta$ can be removed from $X$.
  \item Suppose step~4(b) applies to a set $X$.
    Since this step adds elements to $E^-$ and removes elements from $H$, the thesis holds even if this step is applied to this tuple.\qedhere
  \end{itemize}
\end{proof}

This result guarantees that any answer to query $Q$ will eventually be represented by an element of $\mathcal S^\downarrow_\tau(Q)$, but it does not ensure that that element necessarily becomes an answer, since the set of premises may never become empty.
The next result shows that this is indeed the case.

\begin{proposition}
  \label{lem:negated-compl}
  In the same conditions as the previous theorem, if $\tuple{\theta,E,H}\in\mathcal S^\downarrow_\tau(Q^\ast)$, then there exists $\tau'\geq\tau$ such that either (a)~$\tuple{\theta,E\cup H,\emptyset}\in\mathcal S^\downarrow_{\tau'}(Q^\ast)$ or (b)~$\mathcal S^\downarrow_{\tau'}(Q^\ast)$ does not contain any element of the form \tuple{\theta,E',H'} such that $E'\cup H'=E\cup H$.
\end{proposition}
\begin{proof}
  By induction on the stratification of $\Pi$, i.e., on the stratum of $P_t$, where $t$ is the (instantiated) temporal argument in $P\theta$.

  If $P_t$ is in the lowest stratum, then it does not depend on any negated predicates.
  The thesis then follows by observing that the algorithm behaves for this predicate as in the case of the fragment without negation, and invoking Theorems~\ref{thm:sound} and~\ref{thm:compl}.

  Otherwise, by induction hypothesis, for every negative literal in $H$ there is a timestamp where the corresponding query is decided (any schematic answer for the relevant predicate has no premises).
  Take $\tau_0$ to be the highest of those timestamps.
  In step~(4) of the algorithm, in any generalized hypothetical answers where $\theta$ instantiates the timestamp of $P$ to $t$, all negated literals in the set of premises must either lead to the answer being discarded 4(a), or be moved to the set of evidence 4(b).
  Therefore, at the end of this iteration, for any generalized hypothetical answer $\tuple{\theta,E',H'}\in\mathcal S^\downarrow_{\tau_0+1}(Q^\ast)$ the set $H'$ only has positive literals.
  If $H'$ is always empty, the thesis holds.
  Otherwise, let $\tau'$ be the highest timestamp in all such sets $H'$; the construction in step~(3) guarantees that in $\mathcal S^\downarrow_{\tau'}(Q^\ast)$ all $H'$ in answers that have not been discarded must be empty.
\end{proof}

\begin{corollary}[Completeness]
  If $\theta$ is an answer to $Q$, then there exists $\tau$ such that $\sch{\tau}(Q)$ contains a tuple \tuple{\theta',E,\emptyset} where $\theta'=\theta\sigma$ for some substitution $\sigma$.
\end{corollary}

\paragraph{Remark.}
The correspondence with the declarative notion of (supported) hypothetical answer is less direct, since generalized hypothetical answers are allowed to have negated literals.
Variants of Theorem~\ref{thm:negated-sound} and Lemma~\ref{lem:negated-compl} relating directly to hypothetical answers can be obtained by recursively processing negative hypotheses and evidence:
\begin{itemize}
\item if $\neg\ell$ appears in a set of evidence, replace it with the evidence for $\ell$ that was used to place it there in step~4(b);
\item if $\neg\ell\theta$ appears in a set of hypotheses, then replace the generalized hypothetical answer with all possible generalized hypothetical answers where $\neg\ell$ is replaced by $\neg\ell'$ with $\tuple{\theta,E,H}\in\sch\tau(Q^\ell)$ and $\ell'\in H$.
\end{itemize}
The last step generates a combinatorial explosion of the number of hypothetical answers.
We also believe that generalized hypothetical answers are more readable in practice, as they encapsulate the strata of the program (and the information about negative hypotheses/evidence can be inspected separately).
For these reasons, we do not pursue this correspondence further.

We now present an alternative to Definition~\ref{defn:Sdownarrow} avoiding grounding and discuss its complexity.
\begin{definition}
  \label{defn:Salt}
  The sets $\sch\tau(Q^\ast)$ and $\mathcal B_\tau(Q^\ast)$, where $Q^\ast=\tuple{P,\Pi}$ is one of the queries under consideration, are recursively defined as follows.
  \begin{enumerate}
  \item $\sch{-1}(Q^\ast)=\emptyset$.
  \item For each $\tuple{\theta,E,H}\in\sch{\tau-1}(Q^\ast)$, let $M$ be the (possibly empty) set of elements of $H^+$ with timestamp $\tau$.
    For all substitutions $\sigma$ such that $M\sigma\subseteq D|_\tau$, $\tuple{\theta,E\cup M\sigma,(H\setminus M)\sigma}\in\mathcal B_\tau(Q^\ast)$.
  \item For each $\tuple{\theta,H}\in\mathcal P_{Q^\ast}$, let $M$ be the set of elements of $H$ with minimal timestamp.
    \begin{enumerate}
    \item If $M^+$ is non-empty, then for every substitution $\sigma$ such that $M^+\sigma\subseteq D|_\tau$, then $\tuple{\theta\sigma,M^+\sigma,(H\setminus M^+)\sigma}\in\mathcal B_\tau(Q^\ast)$.
    \item Let $\sigma=\subst{T:=\tau}$ with $T$ the temporal variable in $P$.
      If all elements of $M\sigma$ are negative literals, then $\tuple{\theta\sigma,\emptyset,H\sigma}\in\mathcal B_\tau(Q^\ast)$.
    \item Let $\sigma=\subst{T:=\tau}$ with $T$ the temporal variable in $P$.
      If every element of $M\sigma$ has timestamp strictly larger than the timestamp of $P\sigma$, then $\tuple{\theta\sigma,\emptyset,H\sigma}\in\mathcal B_\tau(Q^\ast)$.
    \end{enumerate}
  \item $\sch\tau(Q^\ast)$ is obtained from $\mathcal B_\tau(Q^\ast)$ as follows.
    Fix a topological ordering of the stratification of $\Pid$.
    There is only a finite number of $P_t$ such that $Q_P$ has at least one schematic answer with $T\theta=t$ (where $T$ is the temporal parameter in $P$).
    For each of these $P_t$ in order, set $\ell=P(t_1,\ldots,t_n)$ and let $S(P_t)$ be the set of schematic answers \tuple{\theta,E,H} to $Q_P$ such that $T\theta=t$.
    \begin{enumerate}
    \item if $S(P_t)$ contains a tuple \tuple{\theta,E,\emptyset}, then: in each $Q^\ast$, replace every \tuple{\sigma,E,H} such that $\ell\theta$ is unifiable with an element $h\in H^-$ with all possible $\tuple{\sigma\theta',E\theta',H\theta'}$ such that $\theta'$ is a minimal substitution with the property that $\ell\theta$ is not unifiable with $h\theta'$;
    \item in each $\mathcal B_\tau(Q^\ast)$, for each \tuple{\theta,E,H} such that $H^-$ contains an element $h$ with predicate symbol $P$ and timestamp $t$ and there is no tuple $\tuple{\sigma',E_\ell,H_\ell}\in S(P_t)$ such that $h$ and $\ell\sigma'$ are unifiable, then remove $\neg h$ from $H$ and add it to $E$.
    \end{enumerate}
  \end{enumerate}
\end{definition}

\begin{lemma}
  \label{lem:ground}
  The elements of $\mathcal S^\downarrow_\tau(Q^\ast)$ are exactly the ground instances of the elements of $\sch\tau(Q^\ast)$.
\end{lemma}
\begin{proof}[Proof (Sketch.)]
  Straightforward by induction on the construction of the sets $\mathcal S^\downarrow_\tau$ and $\sch\tau$, observing that the steps in Definitions~\ref{defn:Sdownarrow} and~\ref{defn:Salt} always yield sets in the relation stated in the lemma.

  For step~4, since the order of iterations of $\mathcal R$ to compute $\mathcal S^\downarrow_\tau$ from $\mathcal A_\tau$ is immaterial (Lemma~\ref{lem:R-unique}), we can assume without loss of generality that $\mathcal R$ always chooses a literal in the lowest possible stratum.
  Thus one application of $\mathcal R$ (Definition~\ref{defn:Sdownarrow}) corresponds to performing step~4 in Definition~\ref{defn:Salt} for one literal $\ell$.
  Finally, using a topological ordering guarantees that a fixpoint is reached, since processing one of the $P_t$ cannot change elements of $S(P'_t)$ for each $P'_t$ previously processed.
\end{proof}

\begin{theorem}[Complexity]
  \label{thm:algorithm-negated}
  Let $k$ be the highest arity of any predicate that occurs negated in $\Pi$.
  If $\Pi$ is $T$-stratified, then $\sch\tau$ can be computed in time exponential in $k$ and polynomial on the size of $\mathcal P_Q$, $\sch{\tau-1}$, $D|_\tau$ and the total number of queries.
\end{theorem}
\begin{proof}
  Steps~(1), (2) and (3a) are as in Theorem~\ref{thm:algorithm}, but they are now iterating over all sets of queries.
  Therefore the previous time bounds apply, multiplied by the total number of queries.
  Steps~(3b) and~(3c) can be done in time linear on the size of $\mathcal P_Q$.

  Each iteration of step~(4) requires computing the set of negative literals in a schematic answer (which can be done in time linear on the size of the answer) and checking whether a corresponding tuple exists in the current $\mathcal B_\tau$.
  This can be done in time linear in the size of $\mathcal B_\tau$, which is polynomial in the size of $\sch{\tau-1}$ (since it was computed from $\sch{\tau-1}$ by a polynomial algorithm).
  The number of substitutions $\sigma$ that need to be considered is bound by the number of constants to the power $k$, which is constant; thus any remaining answers can be added in worst-case polynomial time.
  Finally, the total number of iterations of step~(4) is at most the total number of queries multiplied by $\tau$.
\end{proof}

The only parameter on which there is an exponential dependency in this complexity bound is $k$.
In practice, this value is small, since predicates in real-life programs typically do not involve more than a few variables.
This minimizes the impact of this exponential complexity.

We can also improve on the average-time complexity by delaying the instantiation in step~4 through a number of other techniques.
One possibility is delaying the application of step~4 until there are no positive atoms in $H$, so that all negative literals are as instantiated as possible.
In particular, if all negations in $\Pi$ are safe, the only potential uninstantiated variables are those in the original query $Q$.
Another possibility is storing $\sigma$ abstractly, e.g.~as a set of constraints on the substitution $\theta$ (such as \subst{X\neq\m{a}}).
In the worst case, though, any of these approaches still has the same complexity as the simpler version presented earlier -- since we can always find an example where the number of answers to the original query is exponential on $k$.

We now illustrate this algorithm with our running example, where all cases are covered.
\begin{example}
  \label{ex:neg-online}
  Assume that there are two patients being monitored, \m{john} and \m{gus}.
  Recall that pre-processing yielded the sets 
  \begin{align*}
    \mathcal P_{Q_R} &= \{
      \tuple{\emptyset,\{\m{BCA}(X,T+2),\underline{\neg\m{GVS}(X,T+1)}\}},
      \\ & \phantom{{}=\{}
      \tuple{\emptyset,\{\neg\m{GVS}(X,T+1),\underline{\neg\m{GVS}(X,T),\neg\m{ST}(X,\m{us},T)}\}}\\
    \mathcal P_{Q_G} &= \tuple{\emptyset,\{\underline{\m{GCM}(X,T),\m{GBOL}(X,T)}\}} \\
    \mathcal P_{Q_S} &= \{\tuple{\emptyset,\{\underline{\m{BCA}(X,T+1)}\}},
      \\ & \phantom{{}=\{}
      \tuple{\emptyset,\{\underline{\neg\m{GVS}(X,T-1),\neg\m{ST}(X,\m{us},T-1)}\}}
  \end{align*}
  where the literals with minimal timestamp in each set of premises are underlined, and that, by definition,
  $\sch{-1}(Q_R)=\sch{-1}(Q_G)=\sch{-1}(Q_S)=\emptyset$.

  We start at time point $\tau=0$, and assume that $D|_0=\{\m{GCM(gus,0)}\}$.
  Since $\sch{\tau-1}(Q^\ast)=\emptyset$ for all queries $Q^\ast$, step~2 is trivial.

  The table below shows the schematic hypothetical answers added in each substep of step~3.
  Substeps~3(b) and 3(c) uses $\sigma=\subst{T:=0}$ in all queries.

  \begin{center}
  \begin{tabular}{cccc} \toprule
    Step & $\mathcal B_0(Q_R)$ & $\mathcal B_0(Q_G)$ & $\mathcal B_0(Q_S)$ \\ \midrule
    3(a)
    & nothing
    & nothing
    & nothing \\ \midrule
    3(b)
    & \tuple{\sigma,\emptyset,\{\m{BCA}(X,2),\neg\m{GVS}(X,1)\}}
    & nothing
    & nothing \\
    & \tuple{\sigma,\emptyset,\{\neg\m{GVS}(X,1),\neg\m{GVS}(X,0),\neg\m{ST}(X,\m{us},0)\}} \\ \midrule
    3(c)
    & \tuple{\sigma,\emptyset,\{\m{BCA}(X,2),\neg\m{GVS}(X,1)\}}
    & nothing
    & \tuple{\sigma,\emptyset,\{\m{BCA}(X,1)\}} \\ \bottomrule
  \end{tabular}
  \end{center}

  We obtain:
  \begin{align*}
    \mathcal B_0(Q_R)=\{ & \tuple{\subst{T:=0},\emptyset,\{\m{BCA}(X,2),\neg\m{GVS}(X,1)\}},\\
    & \tuple{\subst{T:=0},\emptyset,\{\neg\m{GVS}(X,1),\neg\m{GVS}(X,0),\neg\m{ST}(X,\m{us},0)\}}\} \\
    \mathcal B_0(Q_G)=\emptyset \\
    \mathcal B_0(Q_S)=\{ & \tuple{\subst{T:=0},\emptyset,\{\m{BCA}(X,1)\}}\}
  \end{align*}
  We move on to step~4.
  In the stratification (up to timestamp~$0$), \m{GVS_0} and \m{ST_0} at the lowest stratum, and \m{Risk_0} depends on both of them; we include only the relevant information on each $S(P_t)$ -- the substitution and whether the set of hypotheses is empty.
  Since \m{Risk_0} is at the highest stratum, no schematic hypothetical answers can depend on it, and we do not include it in the table.

  \begin{center}
  \begin{tabular}{cccc} \toprule
    $P_t$ & $S(P_t)$ & set & tuples of interest \\ \midrule
    \m{GVS_0} & $\emptyset$
    & $\mathcal B_0(Q_R)$
    & \tuple{\subst{T:=0},\emptyset,\{\neg\m{GVS}(X,1),\neg\m{GVS}(X,0),\neg\m{ST}(X,\m{us},0)\}} \\
    &&& becomes \\
    &&& \tuple{\subst{T:=0},\{\neg\m{GVS}(X,0)\},\{\neg\m{GVS}(X,1),\neg\m{ST}(X,\m{us},0)\}} \\ \midrule
    \m{ST_0} & $\subst{T:=0}$
    & $\mathcal B_0(Q_R)$
    & \tuple{\subst{T:=0},\{\neg\m{GVS}(X,0)\},\{\neg\m{GVS}(X,1),\neg\m{ST}(X,\m{us},0)\}} \\
    & $H\neq\emptyset$ && unchanged \\ \bottomrule
  \end{tabular}
  \end{center}
  
  We finally obtain:
  \begin{align*}
    \sch0(Q_R)=\{ & \tuple{\subst{T:=0},\emptyset,\{\m{BCA}(X,2),\neg\m{GVS}(X,1)\}},\\
    & \tuple{\subst{T:=0},\{\neg\m{GVS}(X,0)\},\{\neg\m{GVS}(X,1),\neg\m{ST}(X,\m{us},0)\}}\} \\
    \sch0(Q_G)=\emptyset \\
    \sch0(Q_S)=\{ & \tuple{\subst{T:=0},\emptyset,\{\m{BCA}(X,1)\}}\}
  \end{align*}

  We move on to $\tau=1$, and assume that $D|_1=\{\m{BCA(john,1)},\m{GCM(gus,1)}\}$.
  We summarize the application of step~2 in the next table.
  \begin{center}
  \begin{tabular}{cccc} \toprule
    set & tuple \\ \midrule
    $\mathcal B_1(Q_R)$
    & \tuple{\subst{T:=0},\emptyset,\{\m{BCA}(X,2),\neg\m{GVS}(X,1)\}} \\
    & (unchanged from $\sch0(Q_R)$)
    \\ \midrule
    $\mathcal B_1(Q_R)$
    & \tuple{\subst{T:=0},\{\neg\m{GVS}(X,0)\},\{\neg\m{GVS}(X,1),\neg\m{ST}(X,\m{us},0)\}}
    \\
    & (unchanged from $\sch0(Q_R)$)
    \\ \midrule
    $\mathcal B_1(Q_S)$
    & \tuple{\subst{T:=0,X:=\m{john}},\{\m{BCA(john,1)}\},\emptyset} \\
    & (from unifying $\m{BCA}(X,1)$ with $D|_1$)
    \\ \bottomrule
  \end{tabular}
  \end{center}

  Furthermore, in step~3 we obtain the following additional schematic hypothetical answers, where substeps~3(b) and 3(c) use $\sigma=\subst{T:=1}$ in all queries.

  \begin{center}\small
  \begin{tabular}{cccc} \toprule
    Step & $\mathcal B_1(Q_R)$ & $\mathcal B_1(Q_G)$ & $\mathcal B_1(Q_S)$ \\ \midrule
    3(a)
    & nothing
    & nothing
    & \tuple{\subst{T:=0,X:=\m{john}},\{\m{BCA(john,1)}\},\emptyset} \\ \midrule
    3(b)
    & \tuple{\sigma,\emptyset,\{\m{BCA}(X,3),\neg\m{GVS}(X,2)\}}
    & nothing
    & \tuple{\sigma,\emptyset,\{\neg\m{GVS}(X,0),\neg\m{ST}(X,\m{us},0)\}} \\
    & \tuple{\sigma,\emptyset,\{\neg\m{GVS}(X,2),\neg\m{GVS}(X,1),\neg\m{ST}(X,\m{us},1)\}} \\ \midrule
    3(c)
    & \tuple{\sigma,\emptyset,\{\m{BCA}(X,3),\neg\m{GVS}(X,2)\}}
    & nothing
    & \tuple{\sigma,\emptyset,\{\m{BCA}(X,2)\}} \\ \bottomrule
  \end{tabular}
  \end{center}

  Combining all these results, we obtain:
  \begin{align*}
    \mathcal B_1(Q_R) = \{
    & \tuple{\subst{T:=0},\emptyset,\{\m{BCA}(X,2),\neg\m{GVS}(X,1)\}}, \\
    & \tuple{\subst{T:=0},\{\neg\m{GVS}(X,0)\},\{\neg\m{GVS}(X,1),\neg\m{ST}(X,\m{us},0)\}}, \\
    & \tuple{\subst{T:=1},\emptyset,\{\m{BCA}(X,3),\neg\m{GVS}(X,2)\}}, \\
    & \tuple{\subst{T:=1},\emptyset,\{\neg\m{GVS}(X,2),\neg\m{GVS}(X,1),\neg\m{ST}(X,\m{us},1)\}}\} \\
    \mathcal B_1(Q_G) = \emptyset \\
    \mathcal B_1(Q_S) = \{
    & \tuple{\subst{T:=1},\emptyset,\{\neg\m{GVS}(X,0),\neg\m{ST}(X,\m{us},0)\}}, \\
    & \tuple{\subst{T:=0,X:=\m{john}},\{\m{BCA(john,1)}\},\emptyset}, \\
    & \tuple{\subst{T:=1},\emptyset,\{\m{BCA}(X,2)\}}\} \\
  \end{align*}
  and we observe that \subst{T:=0,X:=\m{john}} is an answer to $Q_S$.
  In order to perform step~4, we note that the relevant strata are: \m{GVS_0}, \m{GVS_1} and \m{ST_0} at the lowest level; \m{ST_1} at the next level; and finally \m{Risk_0} and \m{Risk_1} in the two highest levels, which we omit since no other queries depend on them.
  We again summarize application of step~4 in a table.

  \begin{center}\small
  \begin{tabular}{cccc} \toprule
    $P_t$ & $S(P_t)$ & set & tuples of interest \\ \midrule
    \m{GVS_0} & $\emptyset$
    & $\mathcal B_1(Q_S)$
    & \tuple{\subst{T:=1},\emptyset,\{\neg\m{GVS}(X,0),\neg\m{ST}(X,\m{us},0)\}} \\
    &&& becomes \\
    &&& \tuple{\subst{T:=1},\{\neg\m{GVS}(X,0)\},\{\neg\m{ST}(X,\m{us},0)\}} \\ \midrule
    \m{GVS_1} & $\emptyset$
    & $\mathcal B_1(Q_R)$
    & \tuple{\subst{T:=0},\emptyset,\{\m{BCA}(X,2),\neg\m{GVS}(X,1)\}} \\
    &&& becomes \\
    &&& \tuple{\subst{T:=0},\{\neg\m{GVS}(X,1)\},\{\m{BCA}(X,2)\}} \\ \midrule
    && $\mathcal B_1(Q_R)$
    & \tuple{\subst{T:=0},\{\neg\m{GVS}(X,0)\},\{\neg\m{GVS}(X,1),\neg\m{ST}(X,\m{us},0)\}} \\
    &&& becomes \\
    &&& \tuple{\subst{T:=0},\{\neg\m{GVS}(X,0),\neg\m{GVS}(X,1)\},\{\neg\m{ST}(X,\m{us},0)\}} \\ \midrule
    && $\mathcal B_1(Q_R)$
    & \tuple{\subst{T:=1},\emptyset,\{\neg\m{GVS}(X,2),\neg\m{GVS}(X,1),\neg\m{ST}(X,\m{us},1)\}} \\
    &&& becomes \\
    &&& \tuple{\subst{T:=1},\{\neg\m{GVS}(X,1)\},\{\neg\m{GVS}(X,2),\neg\m{ST}(X,\m{us},1)\}} \\ \midrule
    \m{ST_0} & \subst{T:=0,X:=\m{john}}
    & $\mathcal B_1(Q_S)$
    & \tuple{\subst{T:=1},\{\neg\m{GVS}(X,0)\},\{\neg\m{ST}(X,\m{us},0)\}} \\
    & $H=\emptyset$ && becomes, from 4(a), \\
    &&& \tuple{\subst{T:=1,X:=\m{gus}},\{\neg\m{GVS(gus,0)}\},\{\neg\m{ST(gus,us,0)}\}} \\
    &&& and then, from 4(b), \\
    &&& \tuple{\subst{T:=1,X:=\m{gus}},\{\neg\m{GVS(gus,0)},\neg\m{ST(gus,us,0)}\},\emptyset} \\ \midrule
    && $\mathcal B_1(Q_R)$
    & \tuple{\subst{T:=0},\{\neg\m{GVS}(X,0),\neg\m{GVS}(X,1)\},\{\neg\m{ST}(X,\m{us},0)\}} \\
    &&& becomes, from 4(a), \\
    &&& \tuple{\subst{T:=0,X:=\m{gus}},\{\neg\m{GVS}(\m{gus},0),\neg\m{GVS}(\m{gus},1)\},\{\neg\m{ST}(\m{gus},\m{us},0)\}} \\
    &&& and then, from 4(b), \\
    &&& \tuple{\subst{T:=0,X:=\m{gus}},\{\neg\m{GVS}(\m{gus},0),\neg\m{GVS}(\m{gus},1),\neg\m{ST}(\m{gus},\m{us},0)\},\emptyset} \\ \midrule
    \m{ST_1} & \subst{T:=1,X:=\m{gus}}
    & $\mathcal B_1(Q_R)$
    & \tuple{\subst{T:=1},\{\neg\m{GVS}(X,1)\},\{\neg\m{GVS}(X,2),\neg\m{ST}(X,\m{us},1)\}} \\
    & $H=\emptyset$ && becomes, from 4(a), \\
    &&& \tuple{\subst{T:=1,X:=\m{john}},\{\neg\m{GVS(john,1)}\},\{\neg\m{GVS(john,2)},\neg\m{ST(john,us,1)}\}} \\
    &&& and then, from 4(b), \\
    &&&
    \tuple{\subst{T:=1,X:=\m{john}},\{\neg\m{GVS(john,1)},\neg\m{ST(john,us,1)}\},\{\neg\m{GVS(john,2)}\}}
    \\ \bottomrule
  \end{tabular}
  \end{center}
  At the end of this step, we have:
  \begin{align*}
    \sch1(Q_R) = \{
    & \tuple{\subst{T:=0},\{\neg\m{GVS}(X,1)\},\{\m{BCA}(X,2)\}}, \\
    & \tuple{\subst{T:=0,X:=\m{gus}},\{\neg\m{GVS}(\m{gus},0),\neg\m{GVS}(\m{gus},1),\neg\m{ST}(\m{gus},\m{us},0)\},\emptyset}, \\
    & \tuple{\subst{T:=1},\emptyset,\{\m{BCA}(X,3),\neg\m{GVS}(X,2)\}}, \\
    & \tuple{\subst{T:=1,X:=\m{john}},\{\neg\m{GVS(john,1)},\neg\m{ST(john,us,1)}\},\{\neg\m{GVS(john,2)}\}} \\
    \sch1(Q_G) = \emptyset \\
    \sch1(Q_S) = \{
    & \tuple{\subst{T:=1,X:=\m{gus}},\{\neg\m{GVS(gus,0)},\neg\m{ST(gus,us,0)}\},\emptyset}, \\
    & \tuple{\subst{T:=0,X:=\m{john}},\{\m{BCA(john,1)}\},\emptyset}, \\
    & \tuple{\subst{T:=1},\emptyset,\{\m{BCA}(X,2)\}}\}
  \end{align*}

  We now consider $\tau=2$, with $D|_2=\{\m{GCM(gus,2)},\m{GBOL(gus,2)}\}$.
  In step~2, the two answers from $\sch1$ with $\m{BCA}(X,2)$ as hypothesis are ignored (since $\m{BCA}(X,2)$ does not unify with any element of $D|_2$) and the remaining ones are copied to $\mathcal B_2$.
  After this step, we have therefore that
  \begin{align*}
    \mathcal B_2(Q_R) = \{
    & \tuple{\subst{T:=0,X:=\m{gus}},\{\neg\m{GVS}(\m{gus},0),\neg\m{GVS}(\m{gus},1),\neg\m{ST}(\m{gus},\m{us},0)\},\emptyset}, \\
    & \tuple{\subst{T:=1},\emptyset,\{\m{BCA}(X,3),\neg\m{GVS}(X,2)\}}, \\
    & \tuple{\subst{T:=1,X:=\m{john}},\{\neg\m{GVS(john,1)},\neg\m{ST(john,us,1)}\},\{\neg\m{GVS(john,2)}\}} \\
    \mathcal B_2(Q_G) = \emptyset \\
    \mathcal B_2(Q_S) = \{
    & \tuple{\subst{T:=1,X:=\m{gus}},\{\neg\m{GVS(gus,0)},\neg\m{ST(gus,us,0)}\},\emptyset}, \\
    & \tuple{\subst{T:=0,X:=\m{john}},\{\m{BCA(john,1)}\},\emptyset}\}
  \end{align*}

  In step~3a we obtain the answer \tuple{\subst{T:=2,X:=\m{gus}},\{\m{GCM(gus,2)},\m{GBOL(gus,2)}\},\emptyset} in $\mathcal B_2(Q_G)$, while steps~3b and~3c are similar to the previous timestamps.
  After step~3, we have therefore
  \begin{align*}
    \mathcal B_2(Q_R) = \{
    & \tuple{\subst{T:=0,X:=\m{gus}},\{\neg\m{GVS(gus,0)},\neg\m{GVS(gus,1)},\neg\m{ST(gus,us,0)}\},\emptyset}, \\
    & \tuple{\subst{T:=1},\emptyset,\{\m{BCA}(X,3),\neg\m{GVS}(X,2)\}}, \\
    & \tuple{\subst{T:=1,X:=\m{john}},\{\neg\m{GVS(john,1)},\neg\m{ST(john,us,1)}\},\{\neg\m{GVS(john,2)}\}} \\
    & \tuple{\subst{T:=2},\emptyset,\{\m{BCA}(X,4),\neg\m{GVS}(X,3)\}}, \\
    & \tuple{\subst{T:=2},\emptyset,\{\neg\m{GVS}(X,3),\neg\m{GVS}(X,2),\neg\m{ST}(X,\m{us},2)\}} \\
    \mathcal B_2(Q_G) = \{
    & \tuple{\subst{T:=2,X:=\m{gus}},\{\m{GCM(gus,2)},\m{GBOL(gus,2)}\},\emptyset}\} \\
    \mathcal B_2(Q_S) = \{
    & \tuple{\subst{T:=1,X:=\m{gus}},\{\neg\m{GVS(gus,0)},\neg\m{ST(gus,us,0)}\},\emptyset}, \\
    & \tuple{\subst{T:=0,X:=\m{john}},\{\m{BCA(john,1)}\},\emptyset}, \\
    & \tuple{\subst{T:=2},\emptyset,\{\neg\m{GVS}(X,1),\neg\m{ST}(X,\m{us},1)\}}, \\
    & \tuple{\subst{T:=2},\emptyset,\{\m{BCA}(X,3)\}}\} \\
  \end{align*}

  The relevant strata for step~4 are now \m{GVS_1}, \m{GVS_2} and \m{ST_1} at the lowest level, followed by \m{ST_2} at the next level.
  As before we can omit \m{Risk} from the discussion, since no other queries depend on that predicate.
  This step is summarized in the table below.

  \begin{center}\small
  \begin{tabular}{cccc} \toprule
    $P_t$ & $S(P_t)$ & set & tuples of interest \\ \midrule
    \m{GVS_1} & $\emptyset$
    & $\mathcal B_2(Q_S)$
    & \tuple{\subst{T:=2},\emptyset,\{\neg\m{GVS}(X,1),\neg\m{ST}(X,\m{us},1)\}} \\
    &&& becomes \\
    &&& \tuple{\subst{T:=2},\{\neg\m{GVS}(X,1)\},\{\neg\m{ST}(X,\m{us},1)\}} \\ \midrule
    \m{GVS_2} & \subst{T:=2,X:=\m{gus}}
    & $\mathcal B_2(Q_R)$
    & \tuple{\subst{T:=1},\emptyset,\{\m{BCA}(X,3),\neg\m{GVS}(X,2)\}} \\
    & $H=\emptyset$ && becomes, from 4(a) and 4(b), \\
    &&& \tuple{\subst{T:=1,X:=\m{john}},\{\neg\m{GVS(john,2)}\},\{\m{BCA(john,3)}\}} \\ \midrule
    && $\mathcal B_2(Q_R)$
    & \tuple{\subst{T:=1,X=\m{john}},\{\neg\m{GVS(john,1)},\neg\m{ST(john,us,1)}\},\{\neg\m{GVS(john,2)}\}} \\
    &&& becomes \\
    &&&\tuple{\subst{T:=1,X=\m{john}},\{\neg\m{GVS(john,1)},\neg\m{ST(john,us,1)},\neg\m{GVS(john,2)}\},\emptyset} \\ \midrule
    && $\mathcal B_2(Q_R)$
    & \tuple{\subst{T:=2},\emptyset,\{\neg\m{GVS}(X,3),\neg\m{GVS}(X,2),\neg\m{ST}(X,\m{us},2)\}} \\
    &&& becomes, from 4(a) and 4(b), \\
    &&& \tuple{\subst{T:=2,X:=\m{john}},\{\neg\m{GVS}(\m{john},2)\},\{\neg\m{GVS}(\m{john},3),\neg\m{ST}(\m{john},\m{us},2)\}} \\ \midrule
    \m{ST_1} & \subst{T:=1,X:=\m{gus}}
    & $\mathcal B_2(Q_S)$
    & \tuple{\subst{T:=2},\{\neg\m{GVS}(X,1)\},\{\neg\m{ST}(X,\m{us},1)\}} \\
    & $H=\emptyset$ && becomes, from 4(a) and 4(b), \\
    &&& \tuple{\subst{T:=2,X:=\m{john}},\{\neg\m{GVS(john,1)},\neg\m{ST(john,us,1)}\},\emptyset}
    \\ \midrule
    \m{ST_2} & \subst{T:=2,X:=\m{john}}
    & $\mathcal B_2(Q_R)$
    & \tuple{\subst{T:=2,X:=\m{john}},\{\neg\m{GVS}(\m{john},2)\},\{\neg\m{GVS}(\m{john},3),\neg\m{ST}(\m{john},\m{us},2)\}} \\
    & $H=\emptyset$ && is removed in step 4(a)
    \\ \midrule
  \end{tabular}
  \end{center}
  At the end of this step, we have:
  \begin{align*}
    \sch2(Q_R) = \{
    & \tuple{\subst{T:=0,X:=\m{gus}},\{\neg\m{GVS(gus,0)},\neg\m{GVS(gus,1)},\neg\m{ST(gus,us,0)}\},\emptyset}, \\
    & \tuple{\subst{T:=1,X:=\m{john}},\{\neg\m{GVS(john,2)}\},\{\m{BCA(john,3)}\}}, \\
    & \tuple{\subst{T:=1,X:=\m{john}},\{\neg\m{GVS(john,1)},\neg\m{ST(john,us,1)},\neg\m{GVS(john,2)}\},\emptyset}\} \\
    & \tuple{\subst{T:=2},\emptyset,\{\m{BCA}(X,4),\neg\m{GVS}(X,3)\}} \\
    \sch2(Q_G) = \{
    & \tuple{\subst{T:=2,X:=\m{gus}},\{\m{GCM(gus,2)},\m{GBOL(gus,2)}\},\emptyset}\} \\
    \sch2(Q_S) = \{
    & \tuple{\subst{T:=1,X:=\m{gus}},\{\neg\m{GVS(gus,0)},\neg\m{ST(gus,us,0)}\},\emptyset}, \\
    & \tuple{\subst{T:=0,X:=\m{john}},\{\m{BCA(john,1)}\},\emptyset}, \\
    & \tuple{\subst{T:=2,X:=\m{john}},\{\neg\m{GVS(john,1)},\neg\m{ST(john,us,1)}\},\emptyset}, \\
    & \tuple{\subst{T:=2},\emptyset,\{\m{BCA}(X,3)\}}\}
  \end{align*}
\eoe
\end{example}

\paragraph{Forgetting} The previous example shows that the online algorithm needs to keep track of a potentially large number of schematic supported answers while running.
Furthermore, even answers without premises need to be kept due to the first check in step~3.
This may raise concerns about this algorithm having potentially unbounded memory requirements.

For stratified programs, this is not the case, since all schematic answers eventually become answers (with no premises) or are discarded (this follows by combining Proposition~\ref{lem:negated-compl} with Lemma~\ref{lem:ground}).
Furthermore, answers can also be forgotten eventually: static analysis on the sets $\mathcal P_Q$ can be used to determine how long an answer can influence the results of the algorithm (essentially, this is a refinement of the notion of the delay of the query), so that all answers can be discarded after enough time has passed.
Hypothetical answers can also be discarded when there is an answer with the same substitution (this is the case for the query $Q_R$ and \subst{T:=1,X:=\m{john}} at the end of the last example).

\paragraph{Explicit instantiation} Step~4 requires instantiating answers with all constants that do not satisfy a particular condition, which can immensely increase the number of schematic answers that have to be considered.
In a practical implementation, as mentioned earlier, we expect that this can be addressed by adding constraints to the substitutions in these answers instead of explicitly instantiating them; in our example, in the different applications of step~4(a), we would typically add constraints $X\neq\m{john}$ or $X\neq\m{gus}$ instead of explicitly instantiating $X$.
In this manner, we expect that we can partially avoid the exponential blowup that this step of the algorithm might cause.

\section{Related work and discussion}
\label{sec:rw}

Our work contributes to the field of stream reasoning, the task of conjunctively reasoning over streaming data and background knowledge~\cite{DellaValle2008,DellaValle2009}.

Research advances on Complex Event Processors and Data Stream Management Systems~\cite{Cugola2012}, together with Knowledge Representation and the Semantic Web, all contributed to the several stream reasoning languages, systems and mechanisms proposed during the last decade~\cite{DellAglio2017}.

Hypothetical query answering over streams can be broadly related to works that study abduction in logic programming~\cite{Inoue94,Denecker02}, namely those that view negation in logic programs as hypotheses and relate it to contradiction avoidance~\cite{Alferes96,Dung91}.
Furthermore, our framework can be characterized as applying an incremental form of data-driven abductive inference.
Such forms of abduction have been used in other works~\cite{Meadows13}, but with a rather different approach and in the plan interpretation context.
To our knowledge, hypothetical or abductive reasoning has not been previously used in the context of stream reasoning, to answer continuous queries, although it has been applied to annotation of stream data~\cite{Alirezaie14}.

Below we focus on four dimensions of stream reasoning — incremental evaluation, unbound wait and blocking queries, incomplete information, and negation —, which, in our opinion, due to the characteristics of our work, provide the most interesting and meaningful comparison with existing works in the field.

\paragraph{Incremental evaluation}
Computing answers to a query over a data source that is continuously producing information, be it at slow or very fast rates, asks for techniques that allow for some kind of \emph{incremental evaluation}, in order to avoid reevaluating the query from scratch each time a new tuple of information arrives.
Several efforts have been made in that direction, capitalising on incremental algorithms based on seminaive evaluation~\cite{Abiteboul1995,Barbieri2010,Gupta1993,Hu2018,Motik2015}, on truth maintenance systems~\cite{Beck2015}, window oriented~\cite{Ghanem2007} among others.
Our algorithm fits naturally in the first class, as it is an incremental variant of SLD-resolution.

\paragraph{Unbound wait and blocking queries}
The problems of unbound wait and blocking queries have deserved much attention in the area of query answering over data streams.
There have been efforts to identify the problematic issues~\cite{Law2011} and varied proposals to cope with their negative effects, as in~\cite{Becketal2015,Ozcep2017,Ronca2018,Zaniolo2012} among others.
Our framework deals with unbound wait by outputting at each time point all supported answers (including some that later may prove to be false), as illustrated e.g.~in Example~\ref{ex:good}.
Furthermore, if we receive a supported answer whose time parameters are all instantiated, then we immediately have a bound on how long we have to wait until the answer is definite (or rejected).

Blocking queries may still be a problem in our framework, though: as seen in Example~\ref{ex:infinite}, blocking operations (in the form of infinitely recursive predicates) may lead to infinite SLD-derivations, which cause the pre-processing step of our algorithm to diverge.
We showed that syntactic restrictions of the kind already considered by~\cite{Motik} are guaranteed to avoid blocking queries; \begin{newtext}Example~\ref{ex:unbound} illustrates that, as in the cited work, these conditions are however not necessary: pre-processing may terminate even though they do not hold.\end{newtext}

Our algorithm also implicitly embodies a forgetting algorithm, as the only elements of $D_\tau$ kept in the $E$ sets are those that are still relevant to compute future answers to $Q$.

The approaches from~\cite{Ronca2018,Walega2019,Zaniolo2012}, which are also specifically Datalog-based, make some restrictions to the form of rules in order to avoid blocking behaviours.

In \cite{Zaniolo2012}, working in a variant of Datalog called Streamlog, the author characterises \emph{sequential rules and programs} with the purpose of avoiding blocking behaviour.
In particular, the rule $\m{Shdn}(X,T) \rightarrow \m{Malf}(X,T-2)$ from Example~\ref{ex:toy} is disallowed in this framework, since the timestamp in the head of a rule may never be smaller than the timestamps of the atoms in the body.

The authors of \cite{Ronca2018} propose the \emph{language of forward-propagating queries}, yet another variant of Temporal Datalog that allows queries to be answered in polynomial time in the size of the input data.
This is again achieved at the cost of disallowing propagation of derived facts towards past time points -- once more precluding rule $\m{Shdn}(X,T) \rightarrow \m{Malf}(X,T-2)$ in Example~\ref{ex:toy}.
The authors present a generic algorithm to compute answers to a query, given a window size, and investigate methods for calculating a minimal window size.

Memory consumption is a concern in~\cite{Walega2019}, where a sound and complete stream reasoning algorithm is presented for a fragment of datalogMTL -- forward-propagating datalogMTL -- that also disallows propagation of derived information towards the past.
DatalogMTL is a Datalog extension of the Horn fragment of MTL~\cite{Koymans1990,Alur1993}, which was proposed in~\cite{Brandt2018} for ontology-based access to temporal log data.
DatalogMTL rules allow propagation of derived information to both past and future time points.
Concerned with the possibly huge or unbounded set of input facts that have to be kept in memory, the authors of~\cite{Walega2019} restrict the set of operators of datalogMTL to one that generates only so-called forward-propagating rules.
They present a sound and complete algorithm for stream reasoning with forward-propagating datalogMTL that limits memory usage through the use of a sliding window.

\begin{newtext}
More recently, \cite{Ronca2022} removes these syntactic restrictions on the forms of rules (keeping only the light requirement of temporal guardedness, which is similar to connectedness) and studies fragments of the language for which the existence of a query delay is decidable.
\end{newtext}

\paragraph{Incomplete information}
When reasoning over sources with incomplete information, the concepts of certain and possible answers, and granularities thereof, inevitably arise~\cite{deLeng2019,Grayetal2007,Lang2014,Razniewski2015} as a way to assign confidence levels to the information output to the user.
These approaches, like ours, also compute answers with incomplete information.

However, our proposal is substantially different from those works, since they focus on answers over past incomplete information.
First, as mentioned in Section~\ref{sec:backgr}, we assume that our time stream is complete, in the sense that whenever it produces a fact about a time instant $\tau$, all EDB facts about time instants $\tau'<\tau$ are already there (in line with the progressive closed world assumption of \cite{Zaniolo2012}).
Secondly, our hypothetical and supported answers are built over present facts and future, still unknown, hypotheses, and eventually either become effective answers or are discarded; in the scenario of past incomplete information, the confidence level of answers may never change.

\paragraph{Negation}
Other authors have considered languages using negation.
Our definition of stratification is similar to the concept of temporal stratification from~\cite{Zaniolo2015}.
However, temporal stratification requires the strata to be also ordered according to time (i.e., if the definition of $p_t$ is in $\Pi_i$ and the definition of $p_{t+1}$ is in $\Pi_j$, then $j\geq i$).
We make no such assumption in this work.

Beck et al.~\cite{Beck2015} also consider a notion of temporal stratification for stream reasoning.
However, their framework also includes explicit temporal operators, making the whole formalization more complex.

Insofar as we are aware, this is also the first framework for continuous query answering that does not require negation to be safe.

\section{Conclusions and future work}
\label{sec:concl}

We introduced a formalism for dealing with answers to continuous queries that are consistent with the information that is available, but may still need further confirmation from future data.
We defined the notions of hypothetical and supported answers declaratively, and showed how they can be computed by means of a variant of SLD-resolution.
By refining this idea, we designed a two-phase algorithm whose online component maintains and updates the set of supported answers.
Our framework also allows for negation, whose declarative semantics is in line with the progressive closed world assumption.

Our methodology avoids some of the typical problems with delay in continuous query answering.
In particular, hypothetical answers allow us to detect that we are not in a situation of unbound wait: if we receive a supported answer whose time parameters are all instantiated, then we immediately have a bound on how long we have to wait until the answer is definite (or rejected).
The usefulness of this information is of course application specific.
In the situation of Example~\ref{ex:blahblah}, we could for example take extra preventive measures to ensure that $\m{Temp(wt25,high,2)}$ does not become true.

The offline step of our two-phase algorithm may diverge due to issues related to blocking queries.
We showed that adequate syntactic restrictions prevent this situation from arising, but that they are not necessary.

For nonrecursive and connected queries without negation, the online step of our algorithm runs in polynomial time.
However, it may still be computationally heavy if the sets of hypothetical answers that it needs to compute are large.
\begin{newtext}In particular, pre-processing may produce output that is exponential in the size of the original program.\end{newtext}
We minimize the impact of this by representing hypothetical answers schematically (since we keep variables uninstantiated), thus limiting the space and time requirements for this algorithm.
By adding instantiation constraints to this representation we can also reduce the impact of the exponential blowup introduced by negation.
These are informal observations though, which we plan to measure more precisely in a future practical evaluation.

The sets of evidence for hypothetical answers can be used to define a partial order of relative ``confidence'' of each hypothetical answer.
It would be interesting to explore the connection between such notions of confidence and other frameworks of reasoning with multiple truth values.

We also plan to implement our algorithm and evaluate its performance in practice
\begin{newtext}-- a first step in this direction, only for the positive fragment of the language, is described in~\cite{Jonas}.\end{newtext}
It would also be interesting to explore approximation techniques for online algorithms in order to allow for unsafe forgetting, with bounds on how many wrong answers may be produced, and to extend our formalism to allow for communication delays, which are present in many interesting practical scenarios.

\section*{Acknowledgements}
This work was partially supported by the Independent Research Fund Denmark, grant DFF-7014-00041, and by FCT through the LASIGE Research Unit, ref.\ UIDB/00408/2020 and ref.\ UIDP/00408/2020.

\bibliographystyle{plain}
\bibliography{bibl}

\end{document}